\def\ketbra#1#2{{\vert#1\rangle\!\langle#2\vert}}
\newcommand{\eu}{\mathrm{e}}
\newcommand{\im}{\mathrm{i}\,}
\newcommand{\vecx}{\mathbf{x}}
\newcommand{\statex}{\mathbf{x}}
\newcommand{\matA}{\mathbf{A}}
\newcommand{\opA}{\mathbf{A}}
\newcommand{\opMD}{\mathbf{M}_D}
\newcommand{\opD}{\mathbf{D}}
\newcommand{\matH}{\mathbf{H}}
\newcommand{\opH}{\mathbf{H}}
\newcommand{\opM}{\mathbf{M}}
\newcommand{\bigO}{\mathcal{O}}
\newcommand{\Ord}[1]{\mathcal{O}\left( #1 \right)}
\newcommand{\tOrd}[1]{\tilde{\mathcal{O}}\left( #1 \right)}
\definecolor{darkblue}{RGB}{63,63,168}
\definecolor{darkgreen}{RGB}{60,140,20}
\definecolor{darkblue}{RGB}{63,63,168}
\definecolor{mydarkgreen}{RGB}{5,89,3}
\definecolor{mydarkred}{RGB}{153, 0,0}
\definecolor{Ms}{rgb}{0.,0.3,0.7}
\definecolor{Q}{rgb}{0.9,0.,0.}
\definecolor{Pr}{rgb}{0.4,0.3,0.9}
\definecolor{LW}{rgb}{0.5,0.5,0.}
\theoremstyle{plain}
\newtheorem{oracle}{Oracle}
\newtheorem{problem}{Problem}
\newtheorem{lemma}{Lemma}
\newtheorem{result}{Result}
\newtheorem{theorem}{Theorem}
\newtheorem{assume}{Assumption}
\begin{document}

\title[Quantum gradient descent and Newton's method]{Quantum gradient descent and Newton's method for constrained polynomial  optimization}

\author{Patrick Rebentrost}
\ead{pr@patrickre.com}
\address{Research Laboratory of Electronics,
Massachusetts Institute of Technology, Cambridge, MA 02139}
\address{Xanadu, 372 Richmond St W, Toronto, M5V 2L7, Canada}
\author{Maria Schuld}
\ead{schuld@ukzn.ac.za}
\address{Quantum Research Group, School of Chemistry and Physics, University of KwaZulu-Natal, Durban 4000, South Africa}
\author{Leonard Wossnig}
\address{Institute for Theoretical Physics, ETH Zurich, 8093 Zurich, Switzerland}
\address{Department of Materials, University of Oxford, Parks Road, Oxford OX1 3PH, United Kingdom}
\address{Department of Computer Science, University College London, Gower Street, London WC1E 7JE, United Kingdom}
\author{Francesco Petruccione}
\address{Quantum Research Group, School of Chemistry and Physics, University of KwaZulu-Natal, Durban 4000, South Africa}
\address{National Institute for Theoretical Physics, KwaZulu-Natal, South Africa}
\author{Seth Lloyd}
\address{Research Laboratory of Electronics, Massachusetts Institute of Technology, Cambridge, MA 02139}
\address{Department of Mechanical Engineering, Massachusetts Institute of Technology, Cambridge, MA 02139}

\vspace{10pt}
\begin{indented}
\item[]\today
\end{indented}

\begin{abstract}
Optimization problems in disciplines such as machine learning are commonly solved with iterative methods. Gradient descent algorithms find local minima by moving along the direction of steepest descent while Newton's method takes into account curvature information and thereby often improves convergence. Here, we develop quantum versions of these iterative optimization algorithms and apply them to polynomial optimization with a unit norm constraint. In each step, multiple copies of the current candidate are used to improve the candidate using quantum phase estimation, an adapted quantum principal component analysis scheme, as well as quantum matrix multiplications and inversions. The required operations perform polylogarithmically in the dimension of the solution vector and exponentially in the number of iterations.
Therefore, the quantum algorithm can be beneficial for high-dimensional problems where a small number of iterations is sufficient.
\end{abstract}

\maketitle

\section{Introduction}
Optimization plays a vital role in various fields.
In machine learning and artificial intelligence, common techniques such as regression, support vectors machines, and neural networks rely on optimization.
Often in these cases the objective function to minimize is a least-squares loss or error function $f(\vecx)$ that takes a vector-valued input to a scalar-valued output \cite{Sra12}.
For strictly convex  functions on a convex set, there exists a unique global minimum, and, in the case of equality-constrained quadratic programming, solving the optimization problem reduces to a matrix inversion problem.
In machine learning, one often deals with either convex objective functions beyond quadratic programming, or non-convex objective functions with multiple minima. In these situations, no single-shot solution is known in general and one usually
has to resort to iterative search in the landscape defined by the objective function.

One popular approach focusses on gradient descent methods such as the famous backpropagation algorithm in the training of neural networks. Gradient descent finds a local minimum starting from an initial guess by iteratively proceeding along the negative gradient of the objective function. Because it only takes into account the first derivatives of $f(\vecx)$, gradient descent may involve many steps in cases when the problem has an unfortunate landscape \footnote{For example where it features long and narrow valleys. A famous example is the Rosenbrock function, for which gradient descent fails completely.}.  For such objective functions, second-order methods, which model the local curvature and correct the gradient step size, have been shown to perform well~\cite{martens2010deep}.
One such method, the so-called Newton's method, multiplies the inverse Hessian to the gradient of the function. By taking into account the curvature information in such a manner, the number of steps required to find the minimum often greatly reduces at the cost of computing and inverting the matrix of the second derivatives of the function with respect to all coordinates. Once the method arrives in the vicinity of a minimum, the algorithm enters a realm of {\it quadratic convergence}, where the number of correct bits in the solution doubles with every step~\cite{boyd04, Nocedal2006}.

As quantum computation becomes more realistic and ventures into the field of machine learning, it is worthwhile to consider in what way optimization algorithms can be translated into a quantum computational framework. Optimization has been considered in various implementation proposals of quantum computing \cite{Durr1996,Farhi2016}. The adiabatic quantum computing paradigm \cite{Farhi00} and its famous sibling, quantum annealing, are strategies to find the ground state of a Hamiltonian and can therefore be understood as `analogue' algorithms for optimization problems. The first commercial implementation of quantum annealing, the D-Wave machine, solves certain quadratic unconstrained optimization problems and has been tested for machine learning applications such as classification \cite{Denchev12, Neven09} and sampling for the training of Boltzmann machines \cite{Benedetti15}. In the gate model of quantum computation, quadratic optimization problems deriving from machine learning tasks such as least-squares regression \cite{Wiebe12, Schuld16} and the least-squares support vector machine \cite{Rebentrost14} were tackled with the quantum matrix inversion technique \cite{Harrow09}, demonstrating the potential for exponential speedups for such single-shot solutions under certain conditions. Variational methods that use classical optimization while computing an objective function on a quantum computer have become popular methods targeting near-term devices with limited coherence times \cite{peruzzo14, farhi14}.

In this work, we provide quantum algorithms for iterative optimization, specifically the gradient descent and Newton's methods. We thereby extend the quantum machine learning literature by techniques that can be used in non-quadratic convex or even non-convex optimization problems.
The main idea is that at each step we take multiple copies of a quantum state
$\ket{\vecx^{(t)}}$ to produce multiple copies of another quantum state
$\ket{\vecx^{(t+1)}}$ by considering the gradient vector and the Hessian matrix of the objective function. Depending on the step size taken, this quantum state improves the objective function.

We consider optimization for the special case of polynomials with a normalization constraint. The class of polynomials we discuss in detail for optimization contains homogeneous polynomials of even order over large-dimensional spaces with a relatively small number of monomials. We also sketch the polynomial optimization with a relatively small number of inhomogeneous terms added. We show how to make the gradient of the objective function and the Hessian matrix operationally available in the quantum computation using techniques such as density matrix exponentiation \cite{Lloyd14}. The objective function is assumed to be given via oracles providing access to the coefficients of the polynomial and allowing black-box Hamiltonian simulation as described in~\cite{Berry2012}.

Since at each step we consume multiple copies of the current solution to prepare a single copy of the next step, the algorithms scale exponentially in the number of steps performed. While this exponential scaling precludes the use of our algorithms for optimizations that require many iterations, it is acceptable in cases when only a few steps are needed to arrive at a reasonably good solution, especially in the case of the Newton method, or when performing a local search.
We note that the computation of gradients on a quantum computer has been investigated before, for a setting in which the coordinates of the input vectors to the function are encoded as binary numbers rather than as the amplitudes of a quantum system \cite{jordan05}.
We also note subsequent work on quantum gradient descent in \cite{Kerenidis2017} and
\cite{Gilyen2017}.

\section{Problem statement}

Gradient descent is formulated as follows. Let $f: \mathbb{R^N} \rightarrow \mathbb{R}$ be the objective function one intends to minimize. Given an initial point $\vecx^{(0)} \in \mathbb{R}^N$, one iteratively updates this point using information on the steepest descent of the objective function in a neighborhood of the current point,
\begin{equation}
\vecx^{(t+1)} = \vecx^{(t)} - \eta\; \nabla f(\vecx^{(t)}),
\label{Eq:graddesc}
\end{equation}
where $\eta > 0$ is a hyperparameter (called the learning rate in a machine learning context) which may in general be step-dependent. Newton's method extends this strategy by taking into account information about the curvature, i.e.~the second derivatives of the objective function, in every step. The iterative update therefore includes the Hessian matrix $\matH$ of the objective function,
\begin{equation}
\vecx^{(t+1)} = \vecx^{(t)} - \eta \matH^{-1} \nabla f(\vecx^{(t)}),
\label{Eq:newton}
\end{equation}
with the entries $H_{ij} = \frac{\partial^2 f}{\partial x_i \partial x_j}$ evaluated at the current point $\vecx^{(t)}$.

\begin{figure}
\centering
\includegraphics[scale = 0.5]{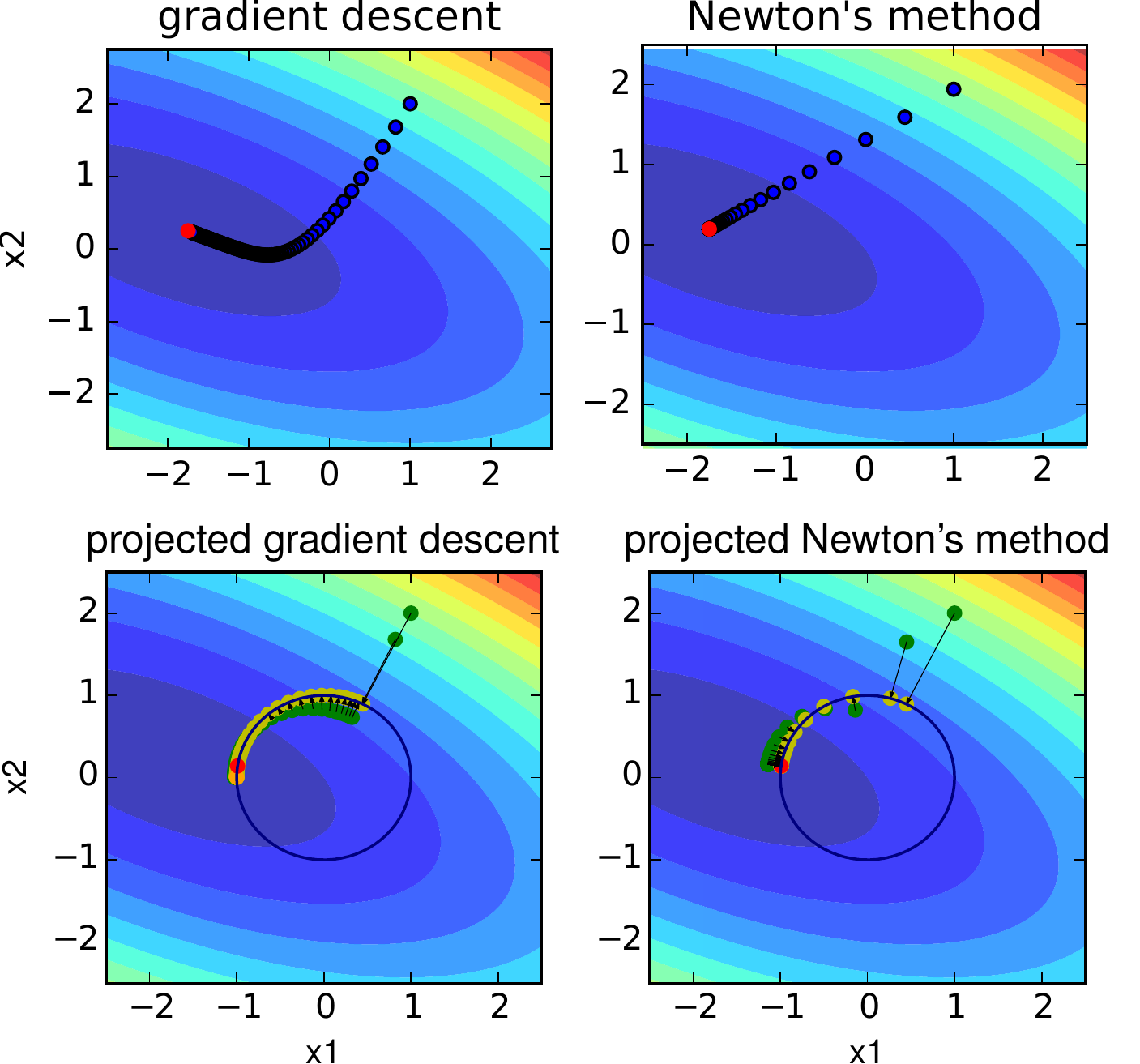}
\caption{Examples of gradient descent and Newton's method, as well as their projected versions. While gradient descent follows the direction orthogonal to the contour lines to find the minimum marked in red (upper left), Newton's method takes the curvature into account to choose a straighter path (upper right). Projected descent methods under unit sphere constraints renormalize (lower left and right, yellow dots) the solution candidate (green dots) after every update and find the minimum on the feasible set. Considered here is a quadratic objective function $f(\vecx) = \vecx^T \matA \vecx + \mathbf{c}^T \vecx$ with $a_{11} = a_{12} = a_{21} = 0.2$, $a_{22} = 0.6$ and $\mathbf{c} = (0.3,0.2)^T$, as well as step size $\eta = 0.2$ and initial state $\vecx^{(0)} = (1,2)^T$.}
\label{Fig:standard_descent}
\end{figure}

These iterative methods are in principle applicable to any sufficiently smooth function. In this work, we restrict ourselves to the optimization of multidimensional homogeneous polynomials of even order and under spherical constraints, where the polynomials only consists of a small number of terms (a property which we refer to as ``sparse"). We also discuss the extension to a small number of \textit{inhomogeneities}.

\subsection{Sparse, even homogenous polynomials}
The \textit{homogeneous} objective function we seek to minimize here is a polynomial of order $2p$ defined over $ \vecx \in \mathbb{R}^N$,
\begin{equation}
f(\vecx) = \frac{1}{2}  \sum\limits_{i_1,\dots,i_{2p}=1}^{N}    A_{i_1\dots i_{2p}}  x_{i_1}  \dots  x_{i_{2p}} ,
\label{Eq:of1}
\end{equation}
with the $N^{2p}$ coefficients $A_{i_1\dots i_{2p}} \in \mathbb{R}$ and $\vecx = (x_1, \dots, x_N)^T$. As we will see, the number of non-zero coefficients, or ``sparsity", will appear in the final resource analysis of the quantum algorithm. By mapping the inputs to a higher dimensional space constructed from $p$ tensor products of each vector, we can write this function as an \textit{algebraic form},
\begin{equation}
f(\vecx) = \frac{1}{2} \; \vecx^{T} \otimes \cdots \otimes \vecx^{T} \matA \, \vecx\otimes \cdots \otimes \vecx.
\label{Eq:of2}
\end{equation}
The coefficients become entries of a $p$-dimensional tensor $\matA  \in \mathbb{R}^{N \times N} \otimes \cdots \otimes \mathbb{R}^{N \times N}$ that can be written as a $N^p \times N^p$ dimensional matrix. Let $\Lambda_A >0$ be given such that the matrix norm of $\opA$ is $\Vert \opA \Vert \leq \Lambda_A$. Equations (\ref{Eq:of1}) and (\ref{Eq:of2}) describe a homogeneous polynomial of even order. For the more familiar case of $p=1$, the objective function reduces to $f(\vecx) = \vecx^T \matA \vecx$ and a common quadratic optimization problem. For $p=2$ and $N=2$, the two-dimensional input $\vecx = (x_1, x_2)^T$ gets projected to a vector containing the polynomial terms up to second order $\vecx \otimes \vecx = (x_1^2, x_1x_2, x_2x_1, x_2^2)^T $, and $\matA$ is of size $4\times 4$.

It will be helpful to formally decompose $\matA$ into a  sum of tensor products on each of the $p$ spaces that make up the `larger space',
\begin{equation}
\matA = \sum_{\alpha=1}^K \matA^{\alpha}_1 \otimes \cdots \otimes \matA^{\alpha}_p,
\label{Eq:dectensA}
\end{equation}
where each $\matA^{\alpha}_i$ is a $N \times N$ matrix for $i=1,\dots,p$ and $K$ is the number of terms in the sum  needed to specify $\matA$. Since the quadratic form remains the same by $\vecx^T \matA^{\alpha}_i \vecx = \vecx^T (\matA^{\alpha}_i + \matA^{\alpha,T}_i )\vecx /2$ we can assume without loss of generality that the $\matA^{\alpha}_i$ are symmetric and also that $\opA$ is symmetric. Note that the representation of Eq.~(\ref{Eq:dectensA}) is useful to simplify the computation of the gradient and the Hessian matrix of $f(\vecx)$. However, our quantum algorithm does not explicitly require this tensor decomposition of $\opA$, which may be hard to compute in general. We only require access to the matrix elements of $\opA$ for the use in standard Hamiltonian simulation methods~\cite{Berry2012} (see the data input discussion below).
In this work, we consider general but sparse matrices $\matA$, where the sparsity requirement arises from the quantum simulation methods. A sparse matrix $\matA$ represents a polynomial with a relatively small number of monomials.

For the gradient descent and Newton's methods, we need to compute expressions for the gradient and the Hessian matrix of the objective function. In the tensor formulation of Eq.~(\ref{Eq:dectensA}), the gradient of the objective function at point $\vecx$ can be written as
\begin{equation}
\nabla f(\vecx) =  \sum_{\alpha=1}^K \sum_{j=1}^p   \left (\prod^p_{i=1 \atop i\neq j}\vecx^{T} \matA^{\alpha}_i \vecx \right) \; \matA^{\alpha}_j \; \vecx =: \opD(\vecx) \vecx,
\label{Eq:deriv}
\end{equation}
which can be interpreted as an operator $\opD \equiv \opD(\vecx)$ which is a sum of matrices $\matA^{\alpha}_j$ with $\vecx$-dependent coefficients $\prod_{i\neq j}\vecx^{T} \matA^{\alpha}_i \vecx$, applied to a single vector $\vecx$. Note that of course the ordering of the scalar factors in the product $\prod^p_{i=1 \atop i\neq j}\vecx^{T} \matA^{\alpha}_i \vecx$ is unimportant.
In addition, for the matrix norm of the operator $\opD$, note that with $\vert \vecx \vert \leq 1$ we have $\vecx^T \opD(\vecx) \vecx \leq \max_{\vert \vecx \vert \leq 1} \vecx^T \opD(\vecx) \vecx \leq p \max_{\vert \vecx \vert \leq 1}  (\vecx \otimes \dots \otimes \vecx)^T \opA \vecx \otimes \dots \otimes \vecx \leq p \Lambda_A$. Thus we can define a bound for the norm of $\opD$ with $\Lambda_D>0$ for which $\Vert \opD \Vert \leq \Lambda_D$ and $\Lambda_D = \Ord{p\Lambda_A}$.

In a similar fashion, the Hessian matrix at the same point reads,
\begin{eqnarray}
\matH( f(\vecx))  &=&  \sum_{\alpha=1}^K \sum_{{j, k=1\atop j\neq k}}^p  \prod^p_{{i =1 \atop i \neq j,k}} \big(\vecx^{T} \matA^{\alpha}_i \vecx \big)  \; \matA^{\alpha}_k \vecx\; \vecx^T \matA^{\alpha}_j  + \opD=:\matH_1+ \opD,
\label{Eq:hess}
\end{eqnarray}
defining the operator $\matH_1$. 
For the matrix norm of the operator $\opH$, note that with $\vert \vecx \vert \leq 1$ we have $\vecx^T \opH(\vecx) \vecx \leq \max_{\vert \vecx \vert \leq 1} \vecx^T \opH(\vecx) \vecx \leq p^2 \max_{\vert \vecx \vert \leq 1}  (\vecx \otimes \dots \otimes \vecx)^T \opA \vecx \otimes \dots \otimes \vecx \leq p^2 \Lambda_A$. Thus we can define a bound for the norm of $\opH$ with $\Lambda_H>0$ for which $\Vert \opH \Vert \leq \Lambda_H$ and $\Lambda_H = \Ord{p^2\Lambda_A}$.
The core of the quantum algorithms for gradient descent and Newton's method will be to implement these matrices as quantum operators acting on the current state and successively shifting the current state towards the desired minimum.

\subsection{Spherical constraints}

\begin{figure}
\centering
\includegraphics[scale = 0.5]{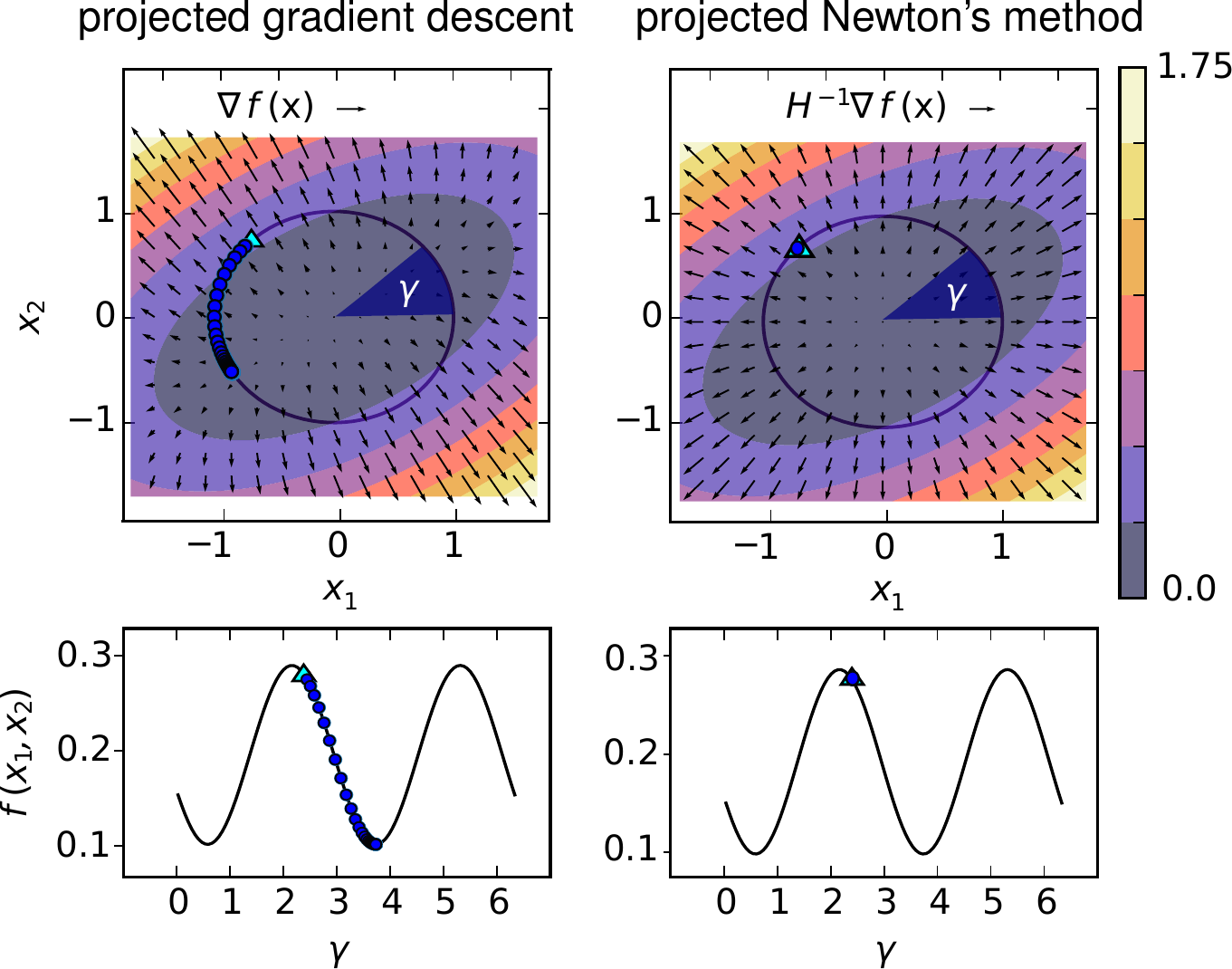}
\caption{Projected gradient descent and projected Newton's method for quadratic optimization under unit sphere constraints (i.e., the solution is constrained to the circle). Below is the function on the feasible set only, with the angle $\gamma$ starting from position $(1,0)^T$. The parameters are chosen as $K = 1, p = 1, N = 2$ and the objective function is $f(\vecx) = \vecx^T \matA \vecx$ with $a_{11} = 0.3, a_{12} =a_{21}= -0.2,  a_{22} = 0.5$ and the initial point (light blue triangle) is chosen as $\vecx^{(0)} = [-0.707, 0.707]$. For quadratic forms, Newton's method struggles to find the minimum on the feasible set plotted below, since the field lines of the descent direction $\matH^{-1} \nabla f(\vecx)$ are perpendicular to the unit circle. This also holds for matrices $A$ that are not positive definite as in this example. }\label{Fig:example_saddle}
\end{figure}

\begin{figure}
\centering
\includegraphics[scale = 0.5]{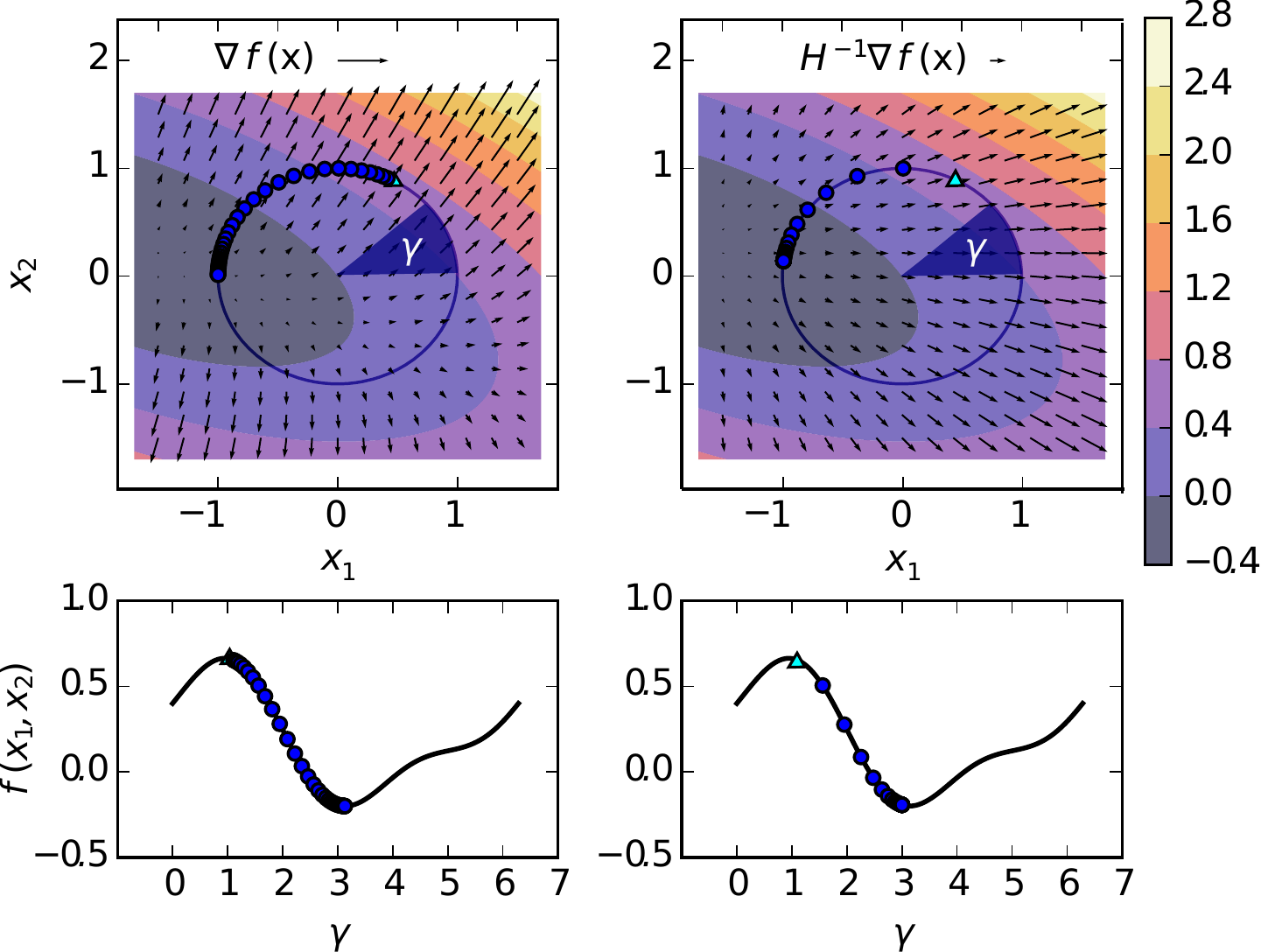}
\caption{The quantum algorithms can be adapted to optimize polynomials that include inhomogeneities. Here, we consider a quadratic form, where the parameters including the inhomogeneity are the same as in Figure \ref{Fig:standard_descent}. The panels are analogous to Figure \ref{Fig:example_saddle}. Newton's method arrives at the solution faster than gradient descent.
}
\label{Fig:example_inhom}
\end{figure}

Since in this work we represent vectors $\vecx$ as quantum states, the quantum algorithm naturally produces normalized vectors with $\vecx^T\vecx = 1 $, thereby implementing a constraint known in the optimization literature as a \textit{spherical constraint}. Applications of such optimization problems appear in image and signal processing, biomedical engineering, speech recognition and quantum mechanics \cite{he10}.

In addition, we include further standard assumptions \cite{Nocedal2006}. We assume an initial guess $\vecx_0$ reasonably close to the constraint local minimum. We assume that the Hessian of the polynomial in Eq.~(\ref{Eq:hess}) in the vicinity of the solution is positive semidefinite. 
We also include an assumption about the smoothness/continuity of the polynomial, which is straightforwardly satisfied as long as the polynomial does not diverge unreasonably in the optimization region under consideration. These assumptions  guarantee saddle-point free optimization and monotonic convergence to the minimum.

The problem we attempt to solve is therefore defined as follows:
\begin{problem} \label{problem1} Let $f(x)$ be a homogeneous polynomial of even order $2p\in \mathbbm Z>0$ as in Eq.~(\ref{Eq:of2}).
Let the matrix $\opA$ defining $f(x)$ be symmetric and have sparsity $s_A$, defined as the number of non-zero matrix elements in each row and column.
In addition, let $\Lambda_A >0$ be given such that the matrix norm of $\opA$ is $\Vert \opA \Vert \leq \Lambda_A$, and without loss of generality $\Lambda_A \geq 1$.
Starting with an initial guess $\vecx_0$, solve the problem
\[
\min_{\vecx} f(\vecx),
\]
subject to the constraint $\vecx^T\vecx = 1$ by finding a local minimum $ \vecx^\ast$.
We assume that the Hessian is positive semidefinite at the solution, i.e., $\opH(f(\vecx^\ast)) \geq 0$, and that the initial guess is sufficiently close to the solution. In addition, the polynomial is smooth and we assume that it is Lipschitz continuous in 
the region of optimization. 
\end{problem}
A well-known adaptation of gradient descent for such constrained problems is \textit{projected gradient descent} \cite{Goldstein64,Levitin66}, where after each iteration the current solution is projected onto the feasible set (here corresponding to a renormalization to $\vecx^T \vecx = 1$). It turns out that our quantum algorithms naturally follow this procedure and the renormalization is realized in each update of the quantum state. We therefore obtain the general convergence properties of the projected gradient descent and projected Newton's methods. Note that for the simple case of the quadratic function $\vecx^T \matA \vecx$, i.e., for $p=1$, gradient descent with a unit norm constraint finds one of the eigenstates of $\opA$.

Although the choice of the objective function allows for an elegant implementation of the operators $\opD$ and $\opH^{-1}$ by means of quantum information processing, it is in some cases not suited for Newton's method. For example, if $p=K=1$ the objective function reduces to a quadratic form and $\opH^{-1} \nabla f(\vecx) = \opD^{-1} \opD \vecx = \vecx$. The direction of search is consequently perpendicular to the unit sphere and Newton's method does not update the initial guess at all (see Figure \ref{Fig:example_saddle}).
For this reason and, more generally, to increase the class of functions that can be optimized, it is interesting to include inhomogeneous terms to the polynomial. For example, a simple linear inhomogeneity can be added by considering the polynomial function $f(\vecx) + \mathbf{c}^T \vecx$, where $f(\vecx)$ is the homogeneous part as before and $ \mathbf{c}$ is a vector specifying the inhomogeneous part (see Figure \ref{Fig:example_inhom}).
We will sketch a method to include such inhomogeneities in Section~\ref{sec:inhpoly}.

\section{Quantum gradient descent algorithm}

\subsection{Data input model}\label{Sec:inputmodel}
To implement a quantum version of the gradient descent algorithm we assume without loss of generality that the dimension $N$ of the vector $\vecx$ is $N=2^{n}$ where $n$ is an integer.
A vector not satisfying this condition can always be padded to $2^{n}$ as long as the additional coordinates do not play a role for the optimization through matrix $\matA$. As mentioned, we consider the case of a normalization constraint $\vecx^T\vecx = 1$. Following previous quantum algorithms \cite{Harrow09,Rebentrost14, Lloyd14} we represent the entries of $\vecx = (x_1,\dots,x_N)^T$ as the amplitudes of a $n$-qubit quantum state  $\ket{\vecx} = \sum_{j=1}^N x_j \ket{j} $, where $\ket{j}$ is the $j$'th computational basis state.

In this work, we assume the following oracles for the data input. First, we define the oracle for providing copies of the initial state.
Let $\vecx^{(0)} = (x^{(0)}_1,\dots,x^{(0)}_N)^T$ be the initial point we choose as a candidate for finding the minimum $\vecx^*$ with $\sum_i |x^{(0)}_i|^2=1$.
\begin{oracle}[Initial state preparation oracle]\label{oracleInit}
There exists an oracle that performs the operation $ \ket {0} \to \ket{\vecx^{(0)}}$ on $n$ qubits. 
\end{oracle}
We assume that the initial quantum state $\ket{\statex^{(0)}}$ corresponding to
$\vecx^{(0)}$ via amplitude encoding can be prepared efficiently, either as an output of a quantum algorithm or by preparing the state from quantum random access memory \cite{Giovannetti2008,Giovannetti2008_2,Martini2009}. For example, efficient algorithms exist for preparing states corresponding to integrable \cite{Grover2002}  and bounded \cite{Soklakov06}  distributions. In addition, we assume an oracle for the matrix $\opA$ \cite{Berry2012}.

\begin{oracle} [Polynomial coefficients oracle]\label{oracleA}
  Let $j,k=1,\cdots,N^p$ with $N = 2^n$. There exists an oracle that performs the operation $\ket{j,k} \ket 0 \to \ket{j,k} \ket {A_{jk}}$ on $2p n + n_\chi$ qubits where $A_{j k}$ is encoded to accuracy $\chi = 2^{-n_\chi}$.
\end{oracle}
We assume that the error $\chi$ is much smaller than other errors and does not affect the analysis \cite{Berry2012}.
Note that the indices $j,k$ can isomorphically be described by numbers $i_1,\cdots,i_{2p}=1,\cdots, N$. Thus, the oracle can be equivalently given as
$\ket{i_1,\cdots,i_{2p}} \ket 0 \to \ket{i_1,\cdots,i_{2p}} \ket {A_{i_1,\cdots,i_{2p}}}$.
To take advantage of  sparsity, we also assume the following oracle which allows us to choose the non-zero matrix elements.
\begin{oracle} [Sparse input oracle] \label{oracleSparse} Let $j=1,\cdots,N^p$ and $l=1,\cdots,s_{A}$.
  There exists an oracle that performs the operation  $ \ket{j,l} \to \ket{j, g_A(j,l)}$ on $2p n$ qubits where the efficiently computable function $g_A(j,l)$ gives the column index of the $l$-th nonzero element of row $j$ of matrix $\opA$.
\end{oracle}
These Oracles (\ref{oracleA}) and (\ref{oracleSparse}) allow for an efficient simulation of $\eu^{-\im \opA t}$ via the methods in \cite{Berry2012} and are often a standard assumption in the literature.
In the remainder of this section, we will first describe how to quantumly simulate the gradient operator as $e^{- i \opD t}$ and how to implement
a quantum state $\ket{\nabla f(\vecx)}$ representing the gradient $\nabla f(\vecx)$.
We then describe how to obtain the update of the current candidate $\ket{\vecx^{(t)}}$ at the $t$-th step of the gradient descent method via the current-step gradient quantum state. Finally, we discuss  the run time of multiple steps.

\subsection{Computing the gradient}\label{subGradient}

In this section, we present the quantum simulation of the gradient operator $\opD$ and the preparation of the gradient state $\opD \ket{\vecx}$, which may be of independent interest. For the definition of $\opD$, see Eq.~(\ref{Eq:deriv}).
We omit the index $t$ indicating the current step for readability. For the diagonalizable matrices discussed here, it sufficies to use the operator norm $\Vert \cdot \Vert = \max_j \vert \lambda_j(\cdot) \vert$, where $\lambda_j$ are the eigenvalues of the matrix under consideration.

\begin{result}[Quantum simulation of the gradient operator]
Given the exact quantum states $\ket \vecx$ as well as Oracles (\ref{oracleA}) and (\ref{oracleSparse}). Let the gradient operator corresponding to the state $\ket \vecx$ be $\opD= \sum_{\alpha}   \sum_{j} \left(\prod_{i\neq j}\vecx^{T} \matA^{\alpha}_i \vecx \right)  \opA_{j}^{\alpha}$ with $\Vert \opD \Vert \leq \Lambda_D$. Then there exists a quantum algorithm that simulates $\eu^{-\im \opD \tau}$ to accuracy $\epsilon$ using $ \Ord{\frac{p^3 \Lambda_D^2 \tau^2}{\epsilon} }$ copies of $\ket{\vecx}$, 
$\Ord{\frac{p^2 \Lambda_D^3 \tau^2 s_A }{\epsilon} }$ queries to the oracles for $\opA$, and $\tOrd{\frac{p^3 \Lambda_D^3 \tau^2 s_A}{\epsilon} \log N } $ single- and two-qubit gates.
\end{result}

First, note that the classical scalar coefficients $\vecx^{T} \matA^{\alpha}_i \vecx$ that occur as weighing factors in Equations (\ref{Eq:deriv}) and (\ref{Eq:hess}) can be written as the expectation values of operators $\opA^{\alpha}_i$ and quantum states $\ket{\statex}$,  or  $\bra{\statex}\opA^{\alpha}_i\ket{\statex} =  \mathrm{tr} \{ \opA^{\alpha}_i \rho \}$  where $\rho = \ketbra{\statex}{\statex}$ is the corresponding density matrix. With this relation, we show that the gradient operator can be implemented as a Hamiltonian in a quantum algorithm.  The gradient operator $\opD$ can be represented as
\begin{equation}\label{eqDtoMD}
\opD =  \mathrm{tr}_{1\dots p-1} \left \{\left( \rho^{\otimes (p-1)} \otimes \mathcal{I} \right) \;  \opMD \right \},
\label{Eq:proxy}
\end{equation}
where $\rho^{\otimes (p-1)} = \rho  \otimes \cdots \otimes \rho$ is the joint quantum state of $p-1$ copies of $\rho$. This operator $\opD$ acts on another copy of $\rho$. The auxiliary operator $\opMD$ is independent  of $\rho$ and given by
\begin{equation} \label{eqMD}
 \opMD = \sum\limits_{\alpha=1}^K \sum\limits^p_{j=1} \left( \bigotimes^p_{{i=1 \atop i\neq j}} \opA_{i}^{\alpha} \right )\otimes \opA_{j}^{\alpha}.
\end{equation}
More informally stated, the operator $\opMD$ can be represented from the
$\matA^{\alpha}_j,\ j=1 \dots p, $ of Eq.~(\ref{Eq:dectensA}) in such a way that for each term in the sum the expectation values of the first $p-1$ subsystems correspond to the desired weighing factor, and the last subsystem remains as the operator acting on another copy of $\rho$.
Note that the order of the factors in the product $ \bigotimes^p_{{i=1 \atop i\neq j}} \opA_{i}^{\alpha} $ only changes the matrix $\opMD$ but not the operator $\opD$.
The matrix $\opMD$ is a sum over $p$ matrices that have the same sparsity as $\opA$, hence its sparsity is $\Ord{p s_A}$.
The matrix exponential $\eu^{-\im\opMD \tau}$ can be simulated efficiently on a quantum computer with access to the above oracles, as shown in \ref{appendixSim}, Lemma \ref{Lm:AandMD}.
Simulating a time $\tau$ to accuracy $\epsilon$, we require $\Ord{p^3 \Lambda_A s_A \tau^2/\epsilon}$ queries to the oracles for $\opA$ and $\tOrd{p^4 \Lambda_A s_A \tau^2 \log N/\epsilon}$ quantum gates.

With this mapping from $\opD$ to $\opMD$ and the exponentiation of $\opMD$, we can exponentiate $\opD$. The idea is to implement the matrix exponentiation $\eu^{-\im \opD \Delta t}$ adapting the quantum principal component analysis (QPCA) procedure outlined in \cite{Lloyd14}. Since $\opD$ depends on the current state $\ket{\statex}$, we cannot simply use the oracular exponentiation methods of \cite{Berry2012}. Instead we exponentiate the sparse matrix $\opMD$ given in Eq.~(\ref{eqMD}).
For a short time $\Delta t$, use multiple copies of $\rho=\ketbra{\statex}{\statex} $ and perform a matrix exponentiation of $\opMD$. In the reduced space of the last copy of $\rho$, we observe the operation
\begin{equation}
\mathrm{tr}_{p-1} \{ \eu^{-\im\opMD\Delta t} \; \rho^{\otimes p}  \; \eu^{\im \opMD\Delta t} \} = \eu^{-\im \opD \Delta t} \rho \; \eu^{\im \opD \Delta t} + \mathcal E.
\label{Eq:pca}
\end{equation}
where
\begin{equation} \label{eqPCAError}
\mathcal E = \mathcal E_{\opMD} + \mathcal E_{\rm samp }.
\end{equation}
Here, the error term $\mathcal E$ contains contributions from the erroneous simulation of $\eu^{-\im\opMD\Delta t}$ given by $\mathcal E_{\opMD} $ and the intrinsic error of the sample-based Hamiltonian simulation given by $\mathcal E_{\rm samp } $. We use the operator norm $\Vert \cdot \Vert$ to bound the error terms.

Regarding the error $\mathcal E_{\opMD}$, we can choose $\Vert \mathcal E_{\opMD} \Vert = \Ord{p^2 \Delta t^2}$, using $\Ord{p \Lambda_A  s_A }$ queries to the oracle for $\opA$ and using $\tOrd{p^2 \Lambda_A  s_A \log N}$ gates,
see \ref{appendixSim}, Lemma \ref{Lm:AandMD}.
Regarding the error $\mathcal E_{\rm samp }$, its size is given by $\Vert\mathcal E_{\rm samp } \Vert = \Ord{\Lambda_D^2 \Delta t^2 }$ similar to \cite{Lloyd14}. The total error for a small time step $\Delta t$ can be upper bounded by $\Vert\mathcal E \Vert = \Ord{p^2 \Lambda_D^2 \Delta t^2 }$.
For  a total time $\tau= n \Delta t$ and given accuracy $\epsilon$, we have $\epsilon = n \Vert\mathcal E \Vert =  \Ord{\frac{p^2 \Lambda_D^2 \tau^2}{n} }$. Thus $n =  \Ord{\frac{p^2 \Lambda_D^2 \tau^2}{\epsilon} }$
steps are required and for each step we use $\Ord{p}$ copies of $\rho$. Thus we need a total of $ \Ord{\frac{p^3 \Lambda_D^2 \tau^2}{\epsilon} }$ copies of $\ket{\vecx}$. We require $\Ord{n p \Lambda_A  s_A } =\Ord{\frac{p^3 \Lambda_D^2 \tau^2 \Lambda_A s_A }{\epsilon} }
=\Ord{\frac{p^2 \Lambda_D^3 \tau^2 s_A }{\epsilon} }$ queries to the oracles for $\opA$ and $\tOrd{n p^2 \Lambda_A  s_A \log N} = \tOrd{\frac{p^4 \Lambda_D^2 \tau^2 \Lambda_A  s_A}{\epsilon} \log N } = \tOrd{\frac{p^3 \Lambda_D^3 \tau^2 s_A}{\epsilon} \log N } $ gates, using that $\Lambda_A = \Ord{\Lambda_D/p}$.

This results assumes perfect states $\ket \vecx$. However, within the gradient descent routine, the states pick up errors due to the imprecise simulation and phase estimation methods. We discuss a generic result  regarding the simulation with such imperfect states using a particular error model. The error model assumes identically, independently distributed (i.i.d.) random error vectors. Such random vectors result in scalar-valued random variables when used to form scalar-valued quantities such as $\bra \vecx \matA_j^\alpha \vec v $ of the matrices $\opA^{\alpha}_j$. One can apply the central limit theorem for a sum of such random variables. Such i.i.d.~models are generally a natural start for an error discussion. The context of the gradient descent method begs the question how applicable such a model is because it is reasonable that the errors do not arise in an independent fashion but rather in a somewhat correlated sense as each step depends on the previous steps. We note that central limit theorems have been shown also for correlated random variables \cite{Hoeffding1994}.
In addition, we note that certain errors are tolerable. For example, in stochastic gradient descent, a gradient is computed from randomly sampling the training set. This leads to surprisingly good learning behavior for neural networks \cite{Hinton06,Bengio09}. More generally it has even been shown that the stochasticity in learning problems leads to better generalization ability through an indirect regularization of the problem.
In many practical situations, sufficiently close to the minimum of a convex optimization problem, a small error in the gradient can nevertheless lead to convergence if the step size is chosen appropriately.
A more general and quantitative analysis of correlated errors can give further insight into the algorithms behavior and is left for future work.

\begin{assume}[I.i.d.~error model] \label{assumeError}
Let the quantum states $\ket {\tilde \vecx}$ be given
with independent, bounded, random errors, i.e.~$  \ket{\tilde \vecx} - \ket{ \vecx}= \vec v$, with $\vec v$ a random vector 
  such that $\vert  \vec v \vert  \ll 1 $.
\end{assume}
Under this error model the simulation performance decreases as $1/\epsilon \to 1/\epsilon^2$.
\begin{result}[Quantum simulation of the gradient operator with i.i.d. imperfect states] \label{resultSimDError}
Assume Oracles (\ref{oracleA}) and (\ref{oracleSparse}).
Let the quantum states $\ket { \vecx}$ be given with errors according to Assumption \ref{assumeError}.
 Let the gradient operator corresponding to the state $\ket \vecx$ be $\opD= \sum_{\alpha}   \sum_{j} \big(\prod_{i\neq j}\vecx^{T} \matA^{\alpha}_i \vecx \big)  \opA_{j}^{\alpha}$ with  $\Vert \opD\Vert \leq \Lambda_D$. Then there exists a quantum algorithm that simulates $\eu^{-\im \opD \tau}$ to accuracy $\epsilon$ using
$ \Ord{p^5 \Lambda_D^2  \frac{\tau^2}{\epsilon^{2}}  } $ copies of $\ket{\vecx}$,
$\Ord{\frac{p^4 \Lambda_D^3 \tau^2 s_A }{\epsilon^2} }$ queries to the oracles for $\opA$ and $\tOrd{\frac{p^5 \Lambda_D^3 \tau^2 s_A}{\epsilon^2} \log N } $ quantum gates.
\end{result}
The modification here is that the error terms for a simulation of time $\Delta t$ becomes
\begin{equation} \label{eqPCAErrorState}
\mathcal E' = \mathcal E_{\opMD} + \mathcal E_{\rm samp } + \mathcal E_{\ket \vecx },
\end{equation}
in comparison to Eq.~(\ref{eqPCAError}). The additional error term is $ \mathcal E_{\ket \vecx }$.
An error in $\ket \vecx $ of size $<1$ directly translates into the error $ \Vert \mathcal E_{\ket \vecx } \Vert = \Ord{p \Delta t}$ linear in $\Delta t$. However, the norm $ \Vert \mathcal E_{\ket \vecx } \Vert $ omits the sign of the error term.
For multiple steps of size $\Delta t$ the error terms add up with their respective signs. Under the assumed error model, such addition of errors leads to the central limit theorem and the error averages to certain extend. We refer to \ref{App:SamBasedHam}, Lemma \ref{lemmaSampleD}, for further discussion.

The next result shows how to prepare a quantum state $\opD \ket{\vecx}$. 
\begin{result} [Gradient state preparation with imperfect states]\label{resultGradient}
Given Oracles (\ref{oracleA}) and (\ref{oracleSparse}). Let quantum states $\ket \vecx$ be given according to the error model of Assumption \ref{assumeError} with error bound $\Ord{ \epsilon_D}\ll1$. Let the gradient operator $\opD$ corresponding to the state $\ket \vecx$ have condition number $\kappa_D$. Then there exists a quantum algorithm that applies the gradient operator to the state $\ket \vecx$. That is, a state $\propto \opD \ket \vecx$ can be prepared to accuracy $\Ord{ \epsilon_D}$ with the required resources given by setting $\epsilon\to \epsilon_D$ and $\tau \to \kappa_D/\epsilon_D$ in Result \ref{resultSimDError} and $\bigO(\lceil \log (2+1/2\epsilon_D) \rceil )$ additional ancilla qubits. 
We require
$\Ord{\kappa_D^2}$
repetitions to success with high probability.
\end{result}

We can implement the multiplication $\opD \ket{\statex} = \ket{\nabla f(\statex)}$ via phase estimation. In order to perform phase estimation and extract the eigenvalues of $\opD$, we need to implement a series of powers of the exponentials in Equation (\ref{Eq:pca}) controlled on a time register. The infinitesimal application of the $\opMD$ operator can be modified as in Refs.~\cite{Lloyd14,Kimmel2017},
as $\ketbra{0}{0} \otimes \mathcal{I} + \ketbra{1}{1}  \otimes \eu^{-\im \opMD \Delta t}$ and applied to a state $\ketbra{q}{q}\otimes \rho $ where $\ket q$ is an arbitrary single control qubit state. For the phase estimation, use
a multi-qubit register with $\bigO(\lceil \log (2+1/2\epsilon_D) \rceil )$ control qubits forming an eigenvalue register to resolve eigenvalues to accuracy $\epsilon_D$.
Then apply  $\sum_{l=1}^{S_{\rm ph} }\ketbra{l\Delta t}{l \Delta t} (\eu^{-\im \opD \Delta t})^l  $ for $S_{\rm ph}$ steps.
If $S_{\rm ph}$ is chosen appropriately, see below, this conditioned application of the matrix exponentials
allows to prepare a quantum state proportional to
\begin{equation}
\sum_j  \beta_j  \ket{u_j(\opD)} \ket{\tilde \lambda_j(\opD) }.
\end{equation}
Here, $\ket{\statex} =\sum_j \beta_j \ket{u_j(\opD)} $ is the original state $\ket{\statex}$ written in the eigenbasis $\{\ket{u(\opD)_j}\}$ of $\opD$, and $\ket{\tilde \lambda(\opD)_j }$ is the additional register encoding the corresponding eigenvalue $\lambda(\opD)_j$ in binary representation with accuracy $\epsilon_D$.
 As in \cite{Harrow09,Wiebe12}, we conditionally rotate an extra ancilla qubit, uncompute the phase estimation, and measure the ancilla qubit, which results in
\begin{equation}
\sum_j \tilde \lambda_j(\opD)  \beta_j  \ket{u_j(\opD)}.
\end{equation}
This performs the matrix multiplication with $\opD$. The accuracy
of this matrix multiplication is $\epsilon_D$ if the total time of the phase estimation is taken to be $S_{\rm ph} \Delta t=\bigO(\kappa_D/\epsilon_D)$ \cite{Wiebe12}.
As the main algorithmic step is controlled Hamiltonian simulation, we obtain the resource requirements from Result \ref{resultSimDError} by replacing $\epsilon \to \epsilon_D$ and 
$\tau\to \kappa_D/\epsilon_D$.
In addition, the ancilla measurement success probability is $1/\kappa_D^2$ \cite{Wiebe12}. Thus, we require $\Ord{\kappa_D^2}$ repetitions for a high probability of success,
which completes Result~\ref{resultGradient}.

\subsection{Single gradient descent step}
\label{Sec:singlestep}
In this section, we establish the gradient descent procedure. To decouple the simulation method with the main 
result, we use the following assumption for the simulation method of $\opD^{(t)}$. 
\begin{assume}\label{assumeSimD}
There exists a quantum simulation method for the operator $\opD^{(t)}$ with $\Vert \opD^{(t)} \Vert \leq \Lambda_D$. We can simulate the controlled operator $\eu^{-\im \ket 1 \bra 1 \otimes \opD^{(t)} \tau}$ to accuracy $\epsilon_D$ using $\Ord{{\rm poly} (p,s_A,\Lambda_D,\tau, 1/\epsilon_D)}$ copies of the state $\ket{ \vecx^{(t)}}$ and queries to the oracles for $\opA$. In addition, the method requires 
$\tOrd{{\rm poly} (p,s_A,\Lambda_D) \log N}$ basic quantum gates for each copy.
\end{assume}
Given the simulation method, we can perform a single gradient descent step.
\begin{result}[Single gradient descent step] \label{resultMain}
 Given quantum states $\ket{ \statex^{(t)}}$ encoding the current solution at time step $t$ to Problem \ref{problem1} in amplitude encoding to accuracy $\epsilon^{(t)}>0$, as well as ancilla qubits.
Let the gradient operator corresponding to the state
$\ket{ \statex^{(t)}}$ be given by $\opD^{(t)}$
with $ \Vert \opD^{(t)}\Vert \leq \Lambda_D$.
Assume a quantum algorithm for the simulation of $\opD^{(t)}$ according to Assumption \ref{assumeSimD} exists. 
Let a step size be given by $0<\eta^{(t)}<1/(2 \Lambda_D)$.
Then there exists a quantum algorithm using Oracles (2) and (3) such that a single step of the gradient descent method of step size $\eta^{(t)}$ to prepare an improved normalized quantum state $\ket{\vecx^{(t+1)}}\propto
\left( \ket{\statex^{(t)}} - \eta^{(t)} \opD^{(t)} \ket{\statex^{(t)}} \right)$ to accuracy $\epsilon^{(t+1)}= \bigO(\eta^{(t)} \epsilon_D^{(t)}+\epsilon^{(t)})$  can be performed. This quantum algorithm requires
$\Ord{{\rm poly} (p,s_A,\Lambda_D,1/\epsilon_D)}$ 
copies of the state $\ket{ \vecx^{(t)}}$ and queries to the oracle of $\opA$, and $\tOrd{{\rm poly} (p,s_A,\Lambda_D,1/\epsilon_D) \log N}$  basic quantum gates.
The success probability of this algorithm is lower-bounded by $1/16$.
\end{result}
We shorthand $\opD := \opD^{(t)}$ in the following. Starting with the current solution, prepare the state
\begin{equation}
\ket{\statex^{(t)}}\left(\cos\theta \ket{0}_{g}  - \im \sin \theta \ket{1}_{ g} \right ) \ket {0\dots0},  \label{eq:prep}
\end{equation}
where $\theta$ is an external parameter determining the gradient step size, the first register contains the state $\ket{\statex^{(t)}}$,  the second register contains a single ancilla qubit used for the gradient step (subscript $g$), and the final register contains ancilla qubits for the gradient matrix multiplication.
The operator $\opD$ can be multiplied to $\ket{\statex^{(t)}}$ conditioned on the first ancilla being in state $\ket{1}_g$.
Let the eigenstates of $\opD$ be given by $\ket{u_j(\opD)}$ and the eigenvalues by $\lambda_j(\opD)$.
After the conditional phase estimation, we obtain:
\begin{equation}
 \ket{ \psi} =\left( \cos\theta\ket{\vecx}  \ket{0}_g \ket{0\dots0} +\im \sin \theta  \sum_j \beta_j \ket{u_j(\opD)}\ket{1}_g \ket{\tilde \lambda_j(\opD)}\right ),
\end{equation}
where $\beta_j = \langle u_j(\opD) \ket{\vecx}$ and $\tilde \lambda_j(\opD)$ is an approximation to $\lambda_j(\opD)$ with accuracy $\epsilon_D$ Now use another ancilla (subscript $d$.) For the $\ket 0_g$ path this ancilla is initialized in the $\ket 1_d$ state and for the $\ket 1_g$ path perform a conditional rotation of this ancilla. Uncomputing the eigenvalue register arrives at the state
\begin{eqnarray}
 \cos\theta \ket{\vecx} \ket{0}_g \ket{1}_d + \nonumber \\ \im \sin \theta \sum_j \beta_j \ket{u_j(\opD)}\ket{1}_g \left(\sqrt{1-\left(\xi_D \tilde\lambda_j(\opD) \right)^2}\ket{0}_d + \xi_D \tilde\lambda_j(\opD) \ket{1}_d \right).
\end{eqnarray}
Chose a constant $\xi_D$ such that $1/\Lambda_D > \xi_D \geq \eta$, such that the rotation is well defined, where $\Lambda_D \geq \max_j \vert \lambda_j(\opD) \vert$.
Now perform a joint measurement of the final two ancillas. The first of these ancillas is required for the vector addition of the current solution and the state related to the first derivative, the second for the matrix multiplication.
The first ancilla is measured in the basis $\ket{\mathrm{yes}} = \frac{1}{\sqrt{2}}(\ket{0}_g + i \ket{1}_g)$ and $\ket{\mathrm{no}} = \frac{1}{\sqrt{2}} ( i\ket{0}_g + \ket{1}_g)$. The second ancilla is measured in the state $\ket 1_d$. Thus, if $\ket{\mathrm{yes}} \ket 1_d$ is measured, we arrive at (omitting normalization)
\begin{eqnarray}
&\propto &\left ( \cos \theta \ket{\vecx^{(t)}} - \xi_D \sin \theta\ \sum_j \tilde \lambda_j(\opD)  \beta_j \ket{u_j(\opD)} \right) \nonumber \\ &\equiv&  \left ( \cos \theta \ket{\vecx^{(t)}} - \xi_D \sin \theta\ \opD \ket{\vecx^{(t)}} \right).
\end{eqnarray}
Choose $\theta$ such that
\begin{eqnarray}
\cos \theta &=&\frac{1}{\sqrt{1+ \eta^2/ \xi_D^2}},\
\sin \theta = \frac{\eta}{\xi_D \sqrt{1+\eta^2/ \xi_D^2}},
\label{Eq:thetachoice}
\end{eqnarray}
to obtain the state
\begin{equation}
\ket{\vecx^{(t+1)}}=\frac{1}{C_D^{(t+1)}}\left( \ket{\statex^{(t)}} - \eta \ket{\nabla f(\statex^{(t)})} \right),
\label{Eq:updatedstate}
\end{equation}
with
\begin{equation}
\left(C_D^{(t+1)} \right)^2= 1-2\eta \bra{ \vecx^{(t)}} \opD \ket{ \vecx^{(t)}} +\eta^2
\bra{ \vecx^{(t)}} \opD^2\ket{ \vecx^{(t)}} .
\end{equation}
Here, $\langle \vecx^{(t)} | \opD | \vecx^{(t)} \rangle$ refers to the inner product of the current solution $\ket{\vecx^{(t)}}$ with the gradient at this point and  $\sqrt{\bra{\vecx^{(t)}}\opD^2\ket{\vecx^{(t)}}} :=\Delta_D$ is the gradient vector length.
One can express $\left(C_D^{(t+1)} \right)^2 = 1-2\eta \Delta_D \cos \varphi_D +\eta^2 \Delta_D^2$,
 where
$\varphi_D$ is the angle between $\ket{\vecx^{(t)}}$ and $\opD\ket{\vecx^{(t)}}$.
The success probability of the measurement of $\ket{\mathrm{yes}} \ket 1$ is given by
\begin{equation}
P_{{\rm yes},1} = \frac{1}{2(1+\eta^2/\xi_D^2)}\left(C_D^{(t+1)} \right)^2.
\end{equation}
For sufficiently bounded step size, this success probability can be lower bounded.
The angle $\varphi_D = 0$ achieves the lower bound
\begin{eqnarray}
\left(C_D^{(t+1)} \right)^2 \geq \left (1-\eta \Delta_D \right)^2.
\end{eqnarray}
With $0<\eta\leq 1/(2\Lambda_D)$ by assumption and $\Delta_D \leq \Lambda_D$ by definition, we have $\left(C_D^{(t+1)} \right)^2 \geq 1/4$. In addition,  $1+\eta^2/\xi_D^2 \leq 2$ because $\eta \leq \xi_D$ by the choice of $\xi_D$. Thus
$P_{{\rm yes},1} >1/16$.
Thus, the upper bound for the number of repetitions needed, which is $\Ord{1/P_{\mathrm{yes},1}}$, is at most $\Ord{1}$ in this case.

The quantum state $\ket{\vecx^{(t+1)}}$ is an approximation to the classical vector $\vecx^{(t+1)}$, which would be the result of a classical projected gradient descent update departing from a normalized vector $\vecx^{(t)}$. 
The error of the updated step is bounded by the error of the previous step $\epsilon^{(t)}$ plus the error of the conditioned matrix multiplication, i.e.~$\Ord{\epsilon^{(t)}+\eta \epsilon_D^{(t)}}$. Given the assumptions stated above, this state can be prepared via phase estimation, an ancilla rotation and measurement.
The measurement and post-selection normalizes the quantum state at each step. The resource requirements are obtained by setting $\tau \to \Ord{1/\epsilon_D}$ from phase estimation into the simulation method Assumption \ref{assumeSimD}. 
Taking a step size according to the stated assumption, the probability of success and thus the number of repetitions to success are constants.

Result \ref{resultMain} uses generically the Assumption \ref{assumeSimD} for the simulation of $\opD$. 
We now provide a concrete estimate of the resources needed to implement the single-step algorithm using the error model Assumption \ref{assumeError} and our sample-based simulation method Lemma \ref{lemmaSampleD}. 

\begin{result}\label{resultMainResource}
In the setting of Result \ref{resultMain}, include the Assumption \ref{assumeError} for the errors of the current state and a simulation method for the operator $\opD^{(t)}$ according to Lemma \ref{lemmaSampleD}.
The quantum algorithm to perform Result \ref{resultMain} then requires
$\Ord{p^5 \Lambda_D^2/ (\epsilon_D^{(t)})^4    }$
copies of $\ket{ \statex^{(t)}}$. The required number queries to the oracle for $\opA$ is given by
$\Ord{p^4 \Lambda_D^3  s_A/(\epsilon_D^{(t)})^4 }$
and the number of elementary quantum operations is given by
$\tOrd{p^5 \Lambda_D^3 s_A \log N / (\epsilon_D^{(t)})^4}$.
\end{result}
The resource requirements for this result are given from Lemma \ref{lemmaSampleD} for the erroneous sample-based Hamiltonian simulation analyzed in \ref{App:SamBasedHam}. The extension to the controlled simulation as required by phase estimation is done in an overhead that factors in as a constant into the resource requirements \cite{kimmel2017hamiltonian}. Replacing $\tau\to \Ord{1/\epsilon_D^{(t)}}$ gives the stated Result \ref{resultMainResource}.

\subsection{Multiple steps}
\label{sectionMultipleSteps}

\begin{figure}
\centering
\begin{tikzpicture}
\path (0,0) node[] (01) {$\textcolor{mydarkgreen}{\ket{\vecx^{(0)}}}$};
\path (1.0,0) node[ ] (02){$\textcolor{mydarkgreen}{\ket{\vecx^{(0)}}}$};
\path (2.0,0) node[ ] (04) {$\textcolor{mydarkgreen}{\ket{\vecx^{(0)}}}$};
\path (3.25,0) node[ ] (05) {$...$};
\path (4.4,0) node[] (01) {$\textcolor{mydarkgreen}{\ket{\vecx^{(0)}}}$};
\path (5.4,0) node[ ] (02){$\textcolor{mydarkgreen}{\ket{\vecx^{(0)}}}$};
\path (6.4,0) node[ ] (04) {$\textcolor{mydarkgreen}{\ket{\vecx^{(0)}}}$};
\path (7.4,0) node[] (01) {$\textcolor{mydarkgreen}{\ket{\vecx^{(0)}}}$};
\path (8.4,0) node[ ] (02){$\textcolor{mydarkgreen}{\ket{\vecx^{(0)}}}$};
\path (9.4,0) node[ ] (04) {$\textcolor{mydarkgreen}{\ket{\vecx^{(0)}}}$};
\path (10.4,0) node[] {$\textcolor{mydarkgreen}{\ket{\vecx^{(0)}}}$};
\path (11.4,0) node[ ] (05) {$...$};
\path (3.6,-1.1) node[ ] (05) {$...$};
\path (10.0,-1.1) node[ ] (05) {$...$};
\draw [decorate,decoration={brace,amplitude=6pt},xshift=-4pt,yshift=0pt]
(2.5,-0.3) -- (-0.2,-0.3) node [black,midway, below, yshift = -0.4cm] (red)
{ $\textcolor{mydarkred}{\ket{\vecx^{(1)}}}$};
\draw [decorate,decoration={brace,amplitude=6pt},xshift=-4pt,yshift=0pt]
(6.9,-0.3) -- (4.2,-0.3) node [black,midway, below, yshift = -0.4cm]
{ $\textcolor{mydarkgreen}{\ket{\vecx^{(1)}}}$};
\draw [decorate,decoration={brace,amplitude=6pt},xshift=-4pt,yshift=0pt]
(9.7,-0.3) -- (7.0,-0.3) node [black,midway, below, yshift = -0.4cm]
{ $\textcolor{mydarkgreen}{\ket{\vecx^{(1)}}}$};
\draw [decorate,decoration={brace,amplitude=6pt},xshift=-4pt,yshift=0pt]
(6.0,-1.3) -- (3.2,-1.3) node [black,midway, below, yshift = -0.4cm]  (red2)
{ $\textcolor{mydarkred}{\ket{\vecx^{(2)}}}$};
\draw [decorate,decoration={brace,amplitude=6pt},xshift=-4pt,yshift=0pt]
(10.6,-1.3) -- (7.8,-1.3) node [black,midway, below, yshift = -0.4cm]  (green)
{ $\textcolor{mydarkgreen}{\ket{\vecx^{(2)}}}$};
\path (2.1,-2.9) node[anchor=base, fill=white, align=center] (gar){\textcolor{mydarkred}{garbage}};
\path (7.7,-2.9) node[anchor=base, fill=white, align=center] (cor){\textcolor{mydarkgreen}{correct}};
\draw[mydarkred, ->] (gar)--(red);
\draw[mydarkred, ->] (gar)--(red2);
\draw[mydarkgreen, ->] (cor)--(green);
\draw[->] (-0.7,0)--(-0.7,-3);
\path (-1.0,-1.5) node[anchor=base, fill=white, align=center, rotate=90] {iterations};
\end{tikzpicture}
\caption{Both the quantum gradient descent and Newton's method `consume' copies in every step and have only a probability of less than one to produce the correct updated state. This limits their application to local searches with a small amount of iterations and is in principle an underlying feature of iterative quantum routines with probabilistic outcomes.}
\label{Fig:copiesconsumed}
\end{figure}
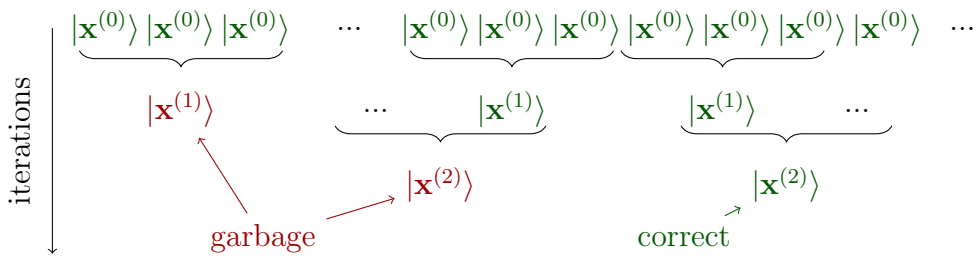

We have presented a single step of the quantum version of  gradient descent for polynomial optimization. We now estimate the resource requirements for multiple steps.
It is reasonable to bound the space explored with the quantum gradient descent algorithm by a constant, as all quantum states live on the surface of a unit sphere.

\begin{result}[Multiple gradient steps]
Assume the setting of optimization of polynomials given in Problem \ref{problem1}.
Let the task be to use the quantum gradient descent method for $T\geq 0$ steps to prepare a
solution $\ket {\vecx^{(T)}}$ to final accuracy $\delta>0$, with step-size schedule $\eta^{(t)}>0$ and  gradient operator norms $\Lambda_D^{(t)}>0$.
Assume that
the space explored with the algorithm is bounded by a constant, i.e.~$T \max_t\eta^{(t)} = \Ord{1}$.
Assume a quantum algorithm for the simulation of $\opD^{(t)}$ according to Assumption \ref{assumeSimD} exists at every step $t$.
There exists a quantum algorithm for this task that requires
 \begin{equation}
 \Ord{{\rm poly} (p,s_A,\Lambda,1/\delta)^T},
\end{equation}
 copies of the initial state $\ket {\vecx^{(0)}}$, where $\Lambda = \max_t  \Lambda_D^{(t)}$. The gate requirement is  $\tOrd{{\rm poly} (p,s_A,\Lambda,1/\delta)^T \log N}$ quantum gates.
In addition, if the quantum states $\ket { \vecx^{(t)}}$ at every step satisfy the error model given in Assumption \ref{assumeError} and a simulation method for the operator $\opD^{(t)}$ is provided according to Lemma \ref{lemmaSampleD},
then there exists a quantum algorithm that requires
 \begin{equation}
 \Ord{ \frac{ p^{5T} \Lambda^{2T}}{\delta^{4T}}}
\end{equation}
 copies of the initial state $\ket {\vecx^{(0)}}$ and $\tOrd{ \frac{p^{5T}\Lambda^{3T}s_A^T}{\delta^{4T}} \log N}$ quantum gates.
\end{result}
When performing multiple steps, $t=0,\dots,T$,
the step size parameter $\eta$ is usually decreased as one gets closer to the target, i.e.~$\eta \to \eta^{(t)}$.
Let $\delta>0$ be the final desired accuracy after $T$ steps.
Each step incurs the error $ \epsilon^{(t)}=\epsilon^{(t-1)} + \eta^{(t)} \epsilon_D^{(t)}$, i.e., the error from the previous step and the gradient error. Hence the accumulated error is $ \epsilon^{(t)}= \epsilon^{(0)} +  \sum_{t'=0}^{t-1} \eta^{(t')} \epsilon_D^{(t')}$. At step $T$ this error shall be $ \epsilon^{(T)} \leq \delta$. Assume for the discussion $\epsilon^{(0)} =0$. To achieve this final error $\delta$, choose the desired error of each gradient multiplication $\epsilon_D^{(t)} = \delta/T\eta^{(t)} $. Using this $\epsilon_D^{(t)}$, we have for the accumulated error $\epsilon^{(t)} = t\delta/T\leq \delta$ for $t\leq T$.
For a single gradient descent step according to Result \ref{resultMain}, the
number of copies required is then, using $\epsilon_D= \delta/T\eta$,
\begin{equation}
\Ord{{\rm poly} (p,s_A,\Lambda,1/\delta)},
 \end{equation}
where we have used the assumption $T \eta \leq T \max_t\eta^{(t)} = \Ord{1}$ and $\Lambda = \max_t  \Lambda_D^{(t)}$. Thus, $T$ iterations of the gradient descent method require,
\begin{equation}
\Ord{{\rm poly} (p,s_A,\Lambda, 1/\delta)^T}
\end{equation}
copies of the initial state $\ket{\vecx^{(0)}}$. Each copy requires 
$\tOrd{{\rm poly} (p,s_A,\Lambda_D) \log N}$ gates thus the overall gate requirement is  $\tOrd{{\rm poly} (p,s_A,\Lambda,1/\delta)^T \log N}$ quantum gates.
Now, take the Assumption \ref{assumeError} and the simulation method provided by Lemma \ref{lemmaSampleD}.
For a single gradient descent step according to Result \ref{resultMainResource}, the
number of copies required is then
\begin{equation}
 \Ord{p^5  \Lambda_D^2 \frac{(T\eta)^4}{ \delta^4}}
 = \Ord{\frac{p^5  \Lambda^2 }{\delta^4}}.
\end{equation}
For $T$ iterations of the gradient descent method, we need at most,
\begin{equation}
\Ord{ \frac{p^{5T}\Lambda^{2T}}{\delta^{4T}}},
\end{equation}
copies of the initial state $\ket{\vecx^{(0)}}$. The gate requirement is $\tOrd{ \frac{p^{5T}\Lambda^{3T}s_A^T}{\delta^{4T}} \log N}$.

Thus, for multiple steps a number of copies that is exponentiated by the number of steps is required. While this upper bound potentially can be improved significantly, it is obvious that any exponential growth of the resources in both space and run time with the number of steps is prohibitive. A possible solution is to look at extensions of the gradient descent method which exhibit faster convergence to the minimum in only a few steps. One such option is Newton's method, which requires inverting the Hessian matrix of $f(\vecx)$ at point $\vecx$, an operation that can become expensive on a classical computer while being well suited for a quantum computer.

\section{Quantum Newton's method}

A quantum algorithm for Newton's method follows the quantum gradient descent scheme, but in addition to the conditional implementation of $\opD \ket{\statex} = \ket{\nabla f(\statex)}$, one also applies
an operator $\opH^{-1}$ to $ \ket{\nabla f(\statex)}$ which represents the inverse Hessian matrix.
Our aim (classically and quantumly) here is to invert the well-conditioned subspace of $\opH$ and not take into account zero eigenvalues or eigenvalues that are extremely (exponentially) small. Thus in this work, by
$\opH^{-1}$ we denote the inverse on this well-conditioned subspace.
More formally, we define the well-conditioned subspace of $\opH$ such that $\Lambda_{H^{-1}} := \Vert \opH^{-1}\Vert$ is a constant.
Without loss of generality, we also  assume that $\Vert \opH \Vert = \Ord{1}$ \cite{Harrow09, Wiebe12}.

First, we use an assumption about the simulation of the Hessian.
\begin{assume}\label{assumeSimH}
There exists a quantum simulation method for the operator $\opH^{(t)}$ with $\Vert \opH \Vert \leq \Lambda_H$. We can simulate the controlled operator $\eu^{-\im \ket 1 \bra 1 \otimes \opH^{(t)} \tau}$ to accuracy $\epsilon_H$ using $\Ord{{\rm poly} (p,s_A,\Lambda_H,\tau, 1/\epsilon_H)}$ copies of the state $\ket{ \vecx^{(t)}}$ and queries to the oracles for $\opA$. In addition, the method requires 
$\tOrd{{\rm poly} (p,s_A,\Lambda_H) \log N}$ basic quantum gates for each copy.
\end{assume}
A Newton step can be performed, assuming this simulation method for the Hessian and the gradient simulation as before. 
\begin{result}[Single Newton step] \label{resultNewton}
 Given quantum states $\ket{ \statex^{(t)}}$ encoding the current solution at time step $t$ to Problem \ref{problem1} in amplitude encoding to accuracy $\epsilon^{(t)}>0$, as well as ancilla qubits. Let the gradient operator corresponding to the state
$\ket{ \statex^{(t)}}$ be given by $\opD^{(t)}$
and the Hessian matrix corresponding to the state
$\ket{ \statex^{(t)}}$ be given by $\opH^{(t)}$.
Assume quantum algorithms for the simulation of $\opD^{(t)}$ and $\opH^{(t)}$ according to Assumptions \ref{assumeSimD} and \ref{assumeSimH} exist. 
Let $\Vert \opD^{(t)}\Vert \leq \Lambda_D $,
$:= \Vert \opH^{(t)}\Vert \leq \Lambda_{H} $, and
$\Vert (\opH^{(t)})^{-1}\Vert \leq \Lambda_{H^{-1}}$.
Let a step size be given by $0<\eta^{(t)}<1/(2 \max\{\Lambda_D,\Lambda_{H^{-1}},\Lambda_D \Lambda_{H^{-1}} \})$.
Additionally, let  $\epsilon_{\rm nwt}>0$.
Then there exists a quantum algorithm using Oracles (2) and (3) such that a single step of the Newton's  method of step size $\eta^{(t)}$ to prepare an improved normalized quantum state $\ket{\vecx^{(t+1)}}\propto
\left( \ket{\statex^{(t)}} - \eta^{(t)} (\opH^{(t)})^{-1} \opD^{(t)} \ket{\statex^{(t)}} \right)$ to accuracy $\epsilon^{(t+1)}= \bigO(\eta^{(t)} \epsilon_{\rm nwt}^{(t)}  +\epsilon^{(t)})$  can be performed. 
This quantum algorithm requires
$\Ord{{\rm poly} (p,s_A,\Lambda,1/\epsilon_{\rm nwt})}$ 
copies of the state $\ket{ \vecx^{(t)}}$ and queries to the oracle of $\opA$, and $\tOrd{{\rm poly} (p,s_A,\Lambda,1/\epsilon_{\rm nwt}) \log N}$  basic quantum gates, 
where $\Lambda := \max\{\Lambda_D,\Lambda_H \}$.
The success probability of this algorithm is lower-bounded by $1/16$.
\end{result}

The quantum Newton step continues as follows, similar to the steps for the gradient as above.
Starting with the current solution, prepare the state
\begin{equation}
\ket{\statex^{(t)}}\left(\cos\theta \ket{0}_g  - \im \sin \theta \ket{1}_g \right ) \ket {0\dots0} \ket {0\dots0},  \label{eq:prep}
\end{equation}
where the final registers contain ancilla qubits for the gradient matrix multiplication and the Hessian matrix inversion.
With the eigenstates $\ket{u_j(\opD)}$ and the eigenvalues by $\lambda_j(\opD)$ of $\opD$,
the gradient operator phase estimation obtains:
\begin{equation}
 \cos\theta\ket{\vecx}  \ket{0}_g \ket{0\dots 0} +\im \sin \theta  \sum_j \beta_j \ket{u_j(\opD)}\ket{1}_g \ket{\tilde \lambda_j(\opD)},
\end{equation}
where $\beta_j = \langle u_j(\opD) \ket{\vecx}$. Here, $\tilde \lambda_j(\opD)$ is an approximation to $\lambda_j(\opD)$ with accuracy $\epsilon_{\rm nwt}$.
With the eigenstates of $\opH$ given by $\ket{u_j(\opH)}$ and the eigenvalues by $\lambda_j(\opH)$,
 the Hessian operator phase estimation conditioned on the ancilla being $\ket{1}_g$ obtains
\begin{equation}
\cos\theta \ket{\vecx} \ket{0}_g\ket{0\dots 0} +\im \sin \theta\sum_{j,j'}
 \beta_{j'}' \beta_j \ket{u_{j'}(\opH)}  \ket{1}_g  \ket{\tilde \lambda_j(\opD)} \ket{\tilde \lambda_{j'}(\opH)} .
\end{equation}
Here, $ \beta_{jj'}'= \bra{u_{j'}(\opH)} u_j(\opD) \rangle $ and
$\tilde \lambda_j(\opH)$ is an approximation to $\lambda_j(\opH)$ with accuracy $\epsilon_{\rm nwt}$. Perform
the conditional rotation of an ancilla (subscript $d$) for the derivative operator
\begin{eqnarray}
\ket{1}_g \ket{ \tilde\lambda_j(\opD) } \ket{0}_d \to \ket{1}_g \ket{ \tilde\lambda_j(\opD) }  \left(\sqrt{1-\left(\xi_D \tilde \lambda_j(\opD) \right)^2}\ket{0}_d+ \xi_D \tilde \lambda_j(\opD) \ket{1}_d \right),\nonumber \\
\end{eqnarray}
with $\xi_D$ such that $1/\Lambda_D > \xi_D \geq \eta$ as before. This ancilla is set to $\ket{1}_d$ for the $\ket{0}_g$ path. In addition, perform a conditional rotation of another ancilla (subscript $h$) for the eigenvalues in the well-conditioned subspace of $\opH$ \cite{Harrow09}
\begin{eqnarray}
\ket{1}_g \ket{\tilde \lambda_{j'}(\opH) } \ket{0}_h  \to  \ket{1}_g \ket{ \tilde \lambda_{j'}(\opH) }  \left(\sqrt{1-\left(\frac{\xi_H}{\tilde \lambda_{j'}(\opH)}\right)^2}\ket{0}_h + \frac{\xi_H}{ \tilde \lambda_{j'}(\opH)} \ket{1}_h\right).\nonumber \\
\end{eqnarray}
Here, $\xi_H$ is chosen such that $1/\Lambda_{H^{-1}} > \xi_H \geq \eta$, from $\frac{\xi_H}{\lambda_{j'}(\opH)} \leq 1$ for all $j'$.
The next step uncomputes the eigenvalue registers by running the phase estimations in reverse.
 A combined measurement of the ancillas in the state $\ket {\rm yes} \ket{1}_d\ket{1}_h$ arrives at
\begin{eqnarray}
& \cos\theta \ket{\vecx} - \xi_D \xi_H  \sin \theta  \sum_{jj'}  \beta_{j'}' \beta_{j}   \frac{\tilde \lambda_j(\opD) }{\tilde \lambda_{j'}(\opH)}  \ket{u_{j'}(\opH)} \\ &\quad \approx  \cos\theta \ket{\vecx} - \xi_D \xi_H  \sin \theta  \opH^{-1} \opD \ket{\vecx } .
 \end{eqnarray}
Similar to before, choosing $\theta$ such that
\begin{equation}
\cos \theta =\frac{1}{\sqrt{1+ \frac{\eta^2}{ \xi_D^2\xi_H^2}}},\
\sin \theta = \frac{\eta}{\xi_D \xi_H \sqrt{1+\frac{\eta^2}{ \xi_D^2\xi_H^2}}},
\end{equation}
results in
\begin{equation*}
\ket{\vecx^{(t+1)}}=\frac{1}{C_H^{(t+1)}}\left( \ket{\statex^{(t)}} - \eta \ket{\opH^{-1}\nabla f(\statex^{(t)})} \right),
\end{equation*}
with a normalization factor
\begin{equation}
 \left( C_H^{(t+1)}\right)^2 = 1- 2\eta \bra{ \vecx^{(t)}} \opH^{-1}\opD \ket{ \vecx^{(t)}} +\eta^2 \bra{\vecx^{(t)}} \opD \opH^{-2} \opD \ket{\vecx^{(t)}}.
 \end{equation}
This corresponds to the expressions of the gradient descent method, see Eq.~(\ref{Eq:updatedstate}), with the only difference that instead of the gradient, Newton's direction is used to update the previous state.
The probability of success of the $\ket{\rm yes} \ket 1_d \ket 1_h$ measurement is given by
\begin{equation}
P_{\mathrm{yes},1,1}^{\rm nwt} =  \frac{1}{2\left(1+\frac{\eta^2}{ \xi_D^2\xi_H^2} \right)}\left( C_H^{(t+1)}\right)^2.
\end{equation}
With the same argument as in Section \ref{Sec:singlestep} we can bound the number of repetitions needed due to the non-deterministic outcome by the eigenvalues of the operators.
One can express $\left(C_H^{(t+1)} \right)^2 = 1-2\eta \Delta_H \cos \varphi_H +\eta^2 \Delta^2_H$,
 where $\bra{\vecx^{(t)}} \opD \opH^{-2} \opD \ket{\vecx^{(t)}}:=\Delta_H$ and
$\varphi_H$ is the angle between $\ket{\vecx^{(t)}}$ and $\opH^{-1}\opD\ket{\vecx^{(t)}}$.
For a small step size, this success probability can be lower bounded.
The angle $\varphi_H = 0$ achieves the lower bound
\begin{eqnarray}
\left(C_H^{(t+1)} \right)^2 \geq \left (1-\eta \Delta_H \right)^2.
\end{eqnarray}
With $0<\eta\leq 1/(2\Lambda_D\Lambda_{H^{-1}})$ by assumption and $\Delta_H \leq \Lambda_D\Lambda_{H^{-1}}$ by definition, we have $\left(C_H^{(t+1)} \right)^2 \geq 1/4$. In addition,  $1+\eta^2/(\xi_D^2\xi_H^2) \leq 2$ because $\eta \leq \min \{\xi_D,\xi_H \}$ by the choice of $\xi_D$ and $\xi_H$. Thus
$P_{\mathrm{yes},1,1}^{\rm nwt}  >1/16$.
Thus, the upper bound for the number of repetitions needed, which is $\Ord{1/P_{\mathrm{yes},1,1}^{\rm nwt}   }$, is at most $\Ord{1}$ in this case.

As before, we can concretely specify the resources via the sample-based simulation method of the gradient operator and the Hessian. 
\begin{result}\label{resultNewtonResource}
In the setting of Result \ref{resultNewton}, include the Assumption \ref{assumeError} for the errors of the current state and a simulation method for the operator $\opD^{(t)}$ and the operator $\opH^{(t)}$ according to Lemmas 
\ref{lemmaSampleD} and \ref{lemmaSampleH}, respectively.
The quantum algorithm to perform Result \ref{resultNewton} then requires
$\Ord{\frac{p^9 \Lambda^2}{\epsilon_{\rm nwt}^4}}$
copies of $\ket{ \statex^{(t)}}$, where $\Lambda := \max\{\Lambda_D,\Lambda_H \}$.
The required number of queries to the oracle for $\opA$ is given by
$\Ord{p^8 \Lambda^3  s_A/\epsilon_{\rm nwt} ^4 }$
and the number of elementary operations is given by $\tOrd{p^{10} \Lambda^3 s_A \log N / \epsilon_{\rm nwt}^4}$.
\end{result}

The Hessian is given from Eq.~(\ref{Eq:hess}) by
\begin{equation}
\matH =  \matH_1 + \opD,
\label{Eq:fiderof}
\end{equation}
with $\matH_1$ and $\opD$ as above.
Let the norm of the Hessian $\matH$ be $\Lambda_H := \Vert \opH\Vert$, which also bounds the norm of $\matH_1$.
To obtain the eigenvalues of $\opH$ via phase estimation, we exponentiate $\opH$ via exponentiating the matrices $\matH_1$ and $\opD$ sequentially using the standard Lie product formula \cite{Childs2017}
\begin{equation}
\eu^{\im \opH \Delta t}= \eu^{\im \matH_1  \Delta t}\eu^{\im \opD \Delta t} + \Ord{ \Lambda^2 \Delta t^2}.
\end{equation}
To implement the individual exponentiations themselves we use a similar trick as before. We associate a simulatable auxiliary operator $\opM_{H_1}$  with
$\matH_1$ and, as before, the operator $\opM_D$ with $\opD$.
For the part $\matH_1$, the corresponding operator $\opM_{H_1}$
is given by
\begin{eqnarray}\label{eqMHA}
\opM_{H_1}&=& \sum_{\alpha=1}^K \sum_{ j\neq k}^p  \left(\bigotimes_{i\neq j,k}^p \mathbf{A}^{\alpha}_i \right) \otimes
 \left[ \left(\mathcal{I}\otimes \matA^{\alpha}_k \right) S \left(\mathcal{I}\otimes \matA^{\alpha}_j \right) \right].
\end{eqnarray}
Here, $S$ is the swap matrix between the last two registers.
Let $\sigma$ be an arbitrary state on which the matrix exponential $\eu^{\im \matH_1\Delta t}$ shall be applied on and $\rho=\ketbra{\vecx}{\vecx}$ be the current state. The relationship between the operator $\matH_1$ and $\opM_{H_1}$  is given by
\begin{equation}
\matH_1 \sigma =  \mathrm{tr}_{1} \{ \opM_{H_1} (\rho^{\otimes p-1} \otimes \sigma )\; \},
\end{equation}
similar to Eq.~(\ref{Eq:proxy}).
 This matrix $\opM_{H_1}$ has sparsity $p^2 s_A$ if the  matrix $\opA$ has sparsity $s_A$ and can be simulated via simulations of $\opA$ via the Oracles (\ref{oracleA}) and (\ref{oracleSparse}), see \ref{appendixSim}, Lemma \ref{Lm:AandHA}.  We can use multiple copies of the current state $\rho=\ketbra{\vecx}{\vecx}$ to perform
\begin{eqnarray}
{\rm tr}_{1\cdots p-1} \{ \eu^{- \im \opM_{H_1} \Delta t}\, (\rho \otimes \cdots \otimes \rho)  \otimes \sigma\, \eu^{ \im \opM_{H_1} \Delta t} \} \nonumber \\
 \quad = \mathcal{I} - i \Delta t[\matH_1, \sigma ] + \Ord{\Delta t^2}
\approx \eu^{-\im \matH_1 \Delta t}\, \sigma\, \eu^{\im \matH_1 \Delta t}.
\end{eqnarray}
The error for a small time step arises from the sample-based simulation method, the Lie product formula, and from errors from the current solution $\ket{\vecx}$.
As shown in \ref{App:SamBasedHam}, Lemma \ref{lemmaSampleH}, we require $\Ord{ \frac{p^9 \Lambda_H^2}{\epsilon_{\rm nwt}^4} }$ copies of the current state to perform phase estimation of $\opH$ to accuracy $\epsilon_{\rm nwt}$.

For the Newton's method the performance is determined from computing the gradient, inverting the Hessian, and subsequent vector addition as before.
If the step-size is chosen appropriately, the post-selection for the gradient, Hessian, and taking the Newton step succeeds with constant probability.
The phase estimation for the gradient to accuracy $\epsilon_{\rm nwt}$ requires $ \Ord{ \frac{p^5\Lambda_D^2 }{\epsilon_{\rm nwt}^4} }$ copies.
The phase estimation for the Hessian to accuracy $\epsilon_{\rm nwt}$ requires $ \Ord{ \frac{p^9\Lambda_H^2 }{\epsilon_{\rm nwt}^4} }$ copies.
Thus, the number of copies for a single Newton step
of step size $\eta<1/(2 \max\{\Lambda_D,\Lambda_{H^{-1}},\Lambda_D \Lambda_{H^{-1}} \})$
to prepare an improved solution is
\begin{eqnarray}
 \Ord{\frac{p^5\Lambda_D^2 }{\epsilon_{\rm nwt}^4} +  \frac{p^9\Lambda_H^2 }{\epsilon_{\rm nwt}^4}}  =  \Ord{\frac{p^9 \Lambda^2}{\epsilon_{\rm nwt}^4} }.
\end{eqnarray}
The two phase estimation requirements  contribute additively and we use $\Lambda = \max \{ \Lambda_D,\Lambda_H \}$.
The required number queries to the oracle for $\opA$ is given by
$\Ord{p^8 \Lambda^3  s_A/\epsilon_{\rm nwt} ^4 }$
and the number of elementary operations is given by $\tOrd{p^{10}\Lambda^3 s_A \log N / \epsilon_{\rm nwt}^4}$, adding the requirements for gradient and Hessian and using the maximized $\Lambda$.
The improved solution is prepared to accuracy $\bigO(\eta \epsilon_{\rm nwt}+\epsilon)$, as the current solution is given to accuracy $\epsilon$.

Similar to the discussion on multiple steps for gradient descent above, for $T$ iterations of Newton's method to final accuracy $\delta$, we need at most
\begin{equation} \label{eqNewtonMulti}
 \Ord{ \frac{p^{9T} \Lambda^{2T} }{\delta^{4T}}}
\end{equation}
copies of the initial state $\ket{\vecx^{(0)}}$, where we choose $\Lambda_D$ and $\Lambda_H$ as their respective maximum for all time steps $t$.
The gate complexity is $\tOrd{ \frac{p^{10T}\Lambda^{3T}s_A^T}{\delta^{4T}} \log N}$,  logarithmic in the dimension $N$.

This means that the required number of copies of the initial state depends exponentially on the number of steps $T$. However, recall that in Newton's method in the vicinity of an optimal solution $\vecx^\ast$ the accuracy $\Delta := \vert \vecx^\ast - \vecx^{(T)}\vert$
of the estimate often improves quadratically with the number of iterations, $\Delta \propto \bigO( \exp(-T^2))$. This convergence is for example discussed for unconstrained convex problems in \cite{boyd04}. Theorem 3.5 in Ref.~\cite{Nocedal2006} shows that if the function is twice differentiable, Lipschitz continuous in the neighborhood of a solution, and the Hessian positive definite at the solution then the convergence of Newton's method is quadratic. See also second-order sufficient conditions in Theorem 2.4 therein.
For projected methods, the convergence properties often translate under similar conditions. For example
for optimizing inequality constraints via  Newton's method one obtains a quadratic convergence \cite{Bertsekas1982}.
For proximal Newton methods, which are generalized versions of projected Newton methods, one obtains quadratic convergence under the assumption of strong convexity around the optimum and Lipshitz continuity of the Hessian \cite{Lee2012}.
Thus in some cases, even though the quantum algorithm requires a number of initial copies that grows exponentially in the number of iterations required, the accuracy of the approximate solution yielded by the algorithm
can improve even faster in the number of iterations.

\section{Inhomogeneous polynomials }
\label{sec:inhpoly}

Our methods can be extended to polynomials that are of odd order and also inhomogeneous. An example of a homogeneous polynomial of odd order is $x_1^5 x_2^2 + x_1^3 x_2^4$ and an example of an inhomogeneous polynomial is $x_1^5 x_2^2 + x_1^2 x_2^2$. A detailed discussion of such polynomials will be left for future work. We provide a discussion for a subset of polynomials we can solve in a similar fashion to the homogeneous, even polynomials in Problem \ref{problem1}. The problem can be posed as:
\begin{problem} \label{problem2} Let $f(\statex)$ be a polynomial
for which
\begin{equation}
f(\statex) = f_{\rm hom, even}(\statex) +  f_{\rm inhom}(\statex),
\end{equation}
where $f_{\rm hom,even} (\statex)$ is as in Problem \ref{problem1}.
Let the inhomogeneous part of the objective function be given by
\begin{equation}
f_{\rm inhom}(\mathbf{x}) =
\sum_{j=1}^{p-1} \left(\mathbf{c}_j^T \mathbf{x} \right) \prod\limits_{i=1}^{j-1} \left(\mathbf{x}^T {B}_{ij} \mathbf{x} \right),
\end{equation}
where the vector $\mathbf{c}_j$ and the symmetric matrices $B_{ij}$ define the polynomial. Given an initial guess $\mathbf{x}_0$, solve the problem
\[ \min_\vecx f(\vecx) , \]
subject to the constraint $\vecx^T\vecx = 1$ by finding a local minimum $\mathbf{x}^*$.
Let the assumptions of Problem \ref{problem1} also apply to the problem defined by $f(\mathbf{x})$.
\end{problem}
The term $f_{\rm inhom}(\mathbf{x})$ allows us to represent a class of monomials of uneven degree $\leq 2p-1$.
This class is essentially a sum of  homogeneous even polynomials $\prod\limits_{i=1}^{j-1} \left(\mathbf{x}^T {B}_{ij} \mathbf{x} \right)$, with the term $\mathbf{c}_j^T \mathbf{x}$ adding inhomogeneities.
Thus, with $f_{\rm hom, even}(\mathbf{x}) $ from above, we can represent combinations of homogeneous even polynomials with additional inhomogeneous terms.
In practice, the efficient sparse Hamiltonian simulation methods impose restrictions on the number of inhomogeneous terms that can be efficiently optimized. Sparsity of the matrices $B_{ij}$ implies a relatively small number of inhomogeneous monomials.

The optimization of functions containing terms of the form $f_{\rm inhom}(\mathbf{x})$ can be performed via additional simulation terms and vector additions.
In order to perform the inhomogeneous updates we require the following ingredients:
\begin{itemize}
	\item We analytically compute the respective gradient and Hessian of the function $f_{\rm inhom}(\mathbf{x})$ similar to the homogeneous part.
	\item We simulate the time evolution under the respective gradient and Hessian of the function $f_{\rm inhom}(\mathbf{x})$, using similar quantum state exponentiation methods as described in the homogeneous setting. We have to use additional subroutines to simulate terms that contain inner products of the form $\bra \vecx \mathbf c_j \rangle$, and outer products  such as $\ketbra{\vecx}{ \mathbf c_j} + \ketbra{ \mathbf c_j}{ \vecx}$.

	\item  We perform additional vector additions related to the states $\mathbf c_j$. For example the function $f_{\rm hom, even}(\vecx) + \mathbf{c}^T \vecx$ requires one additional vector addition.  Before adding the vectors we have to conditionally apply the gradient and Hessian operators to the $\mathbf x$ and $\mathbf c_j$ respectively. This requires in each iteration a number of copies of the $\mathbf c_j$ and $\mathbf x$. As we require hence in every state another copy of the states this adds another tree of states to the resource requirements, since we need to build these in parallel to the main algorithm. However, this does not change the overall scaling of the runtime of the whole algorithm.

	\item We perform similar to the homogeneous step matrix-vector multiplications and matrix inversions via a conditional rotations on the eigenvalue registers and postselection. This will result in different success probabilities for the gradient and the Hessian operator than presented above.

	\item Finally we perform a measurement in the yes/no-basis as before and perform the vector addition of the current solution and the step update.
\end{itemize}

All of these steps will add a computational overhead due to additional matrix multiplications and vector additions. This computational overhead will be discussed in a future work. Similarly to the homogeneous case the number of computational steps scales exponentially with the number of iterations $T$ and logarithmically in the dimension of the solution vector.

\section{Discussion and conclusion}

The present work has considered iterative polynomial optimization in a quantum computing framework. We have developed quantum algorithms for the optimization of a class of polynomials under spherical constraints, for which we can find expressions for the gradient and the Hessian matrix. The class of polynomials we can optimize is constrained by sparsity conditions of Hamiltonian simulation methods used here. Beyond polynomials, one can envision a setting where copies of quantum states representing the current solution are consumed for evaluating the first derivative of the objective function. If we can implement the operation
\begin{equation}
\ket{\vecx} \otimes \cdots \otimes \ket{\vecx} \ket{0} \mapsto \ket{\psi}\ket{\nabla f(\vecx)},
\end{equation}
where $\ket{\psi}$ is an arbitrary ``garbage" state, we can use the same basic gradient descent steps as discussed in this work with a similar performance as presented here for polynomial optimization.

In reference \cite{Childs2015}, Childs {\it et al.} presented an exponential improvement of the matrix inversion error dependence, $1/\epsilon \to \log 1/\epsilon$ by using approximation polynomials instead of phase estimation. 
Further exponentially-precise Hamiltonian simulation methods were shown in \cite{Low2016qubitization,Low2017spectral}, 
which may find application in the gradient descent problem. In addition, variable-time amplitude amplification can lead to further quadratic speedups \cite{Brassard2002,Ambainis2010,Childs2015}. The connection of the present work to the semi-definite programming quantum solver in \cite{Brandao2017,Apeldoorn2017} is another potential avenue of study.

Our optimization algorithms scale exponentially in the number of steps performed.
Such a performance is in general expected since the problem we attempt to solve is QMA-hard, since we can reduce certain
$k$-local Hamiltonian decision problem to it, see e.g.~\cite{bookatz2012qma} for a definition.
However, we envision
a few scenarios where the algorithm can nevertheless be useful. One scenario is when the
size of the vectors dominates the dependence on other parameters such as condition
number and error raised to the power of the number of steps.  In this case, classical
computation is prohibitively expensive, while our quantum method scales logarithmically
in the size of the vectors and could allow optimization.  Often, a constant number of steps
can yield a significant improvement on the initial guess, even without finding a local
minimum, whereas the problem would be intractable on a classical computer.
Another case where our quantum algorithm yields potential speed ups is the
application of Newton's method to (locally) convex problems. In such problems,
the number of iterations of Newton's method
required to find a highly accurate solution is only weakly dependent on the dimension
of the system \cite{boyd04}, and often the minimum can be found in around 5-20 steps.
Yet the standard Newton's method is also well known to fail in high-dimensional spaces due to the prevalence of saddle-points, see for example~\cite{martens2010deep}. While this issue is classically not trivial to solve, our quantum method allows for an easy extension to the saddle-free Newton's method~\cite{dauphin2014identifying}. This is done by simply replacing the inverse of the eigenvalues with the inverse of the absolute values of the eigenvalues. Thereby the sign of the eigenvalues around a saddle-point is not changed and hence the algorithm takes steps in the correct direction. Therefore, our quantum algorithm is applicable to a wider range of problems with only slight adaptions.

In summary, the optimization algorithms presented here yield a performance $\bigO({\rm polylog}(N))$
in the dimension $N$ of the vector space over which the optimization is performed,
as long as the number of iterations required is small. When searching for a small number
of solutions in a featureless landscape, a large number iterations is required, and the
algorithms scale polynomially in $N$: in particular, the algorithms cannot do better than
Grover search, which finds $M$ solutions in an essentially `gradient-free' landscape in
time $O(\sqrt{N/M})$.
Further research may find a possible application of quantum gradient algorithms in machine learning for the training of deep neural networks \cite{Hinton06, Bengio09},  which exhibit a large number of parameters.
Gradient descent methods indeed lead to good solutions after a few iterations,
and quantum gradient descent algorithms may provide exponential improvements over their classical counterparts.

\textit{Acknowledgement}$-$
We thank George Siopsis and Simon Benjamin for valuable discussions. We also thank two anonymous referees for their detailed reading of the manuscript and valuable comments. PR and LW thank Joseph Fitzsimons for a hospitable stay at the Centre for Quantum Technologies, Singapore. MS and FP acknowledge support by the South African Research Chair Initiative of the Department of Science and Technology and National Research Foundation.
\\

\appendix

\section{Simulating the gradient matrix $\opM_D$ and the Hessian part $\opM_{H_1}$ }
\label{appendixSim}

We state the complexities as far as required for the discussion in the main part and note that some dependencies may be polynomially improved. We first summarize the known Hamiltonian simulation result. 
\begin{theorem}[Sparse Hamiltonian simulation \cite{Berry2012}]\label{thmHS}
Given the Oracles (\ref{oracleA}) and (\ref{oracleSparse}) for the $s_A$-sparse matrix $\opA$ in the vector space of dimension $N^p$, $s_A\geq 1$. Let $\Vert \opA \Vert \leq \Lambda_A$, $\Lambda_A\geq 1$, and $\Vert \opA\Vert_{\max} := \max_{ij} \vert \opA_{ij} \vert $. There exists a quantum algorithm to simulate $\eu^{-\im \opA \tau}$ to accuracy $0<\epsilon<1$ using $\Ord{c_A \tau}$ queries to the oracles and $\tOrd{g_A \tau p \log N}$ single- and two-qubit quantum gates, where $c_A :=\frac{\Lambda_A}{\sqrt{\epsilon}} + \Vert \opA\Vert_{\max} s_A +1$. Since $\Vert \opA\Vert_{\max} \leq \Lambda_A$, we can relax the upper bound as $c_A =\Ord{ \frac{\Lambda_A s_A}{\sqrt \epsilon}}$, which is the bound used in this work. 
\end{theorem}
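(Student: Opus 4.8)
Since this is the black-box sparse Hamiltonian simulation theorem of \cite{Berry2012}, the plan is to recall that construction and track how the stated resource counts emerge. The plan has three parts: (i) build, from Oracles (\ref{oracleA}) and (\ref{oracleSparse}), the discrete-time quantum walk associated with $\opA$; (ii) relate the spectrum of the walk operator to the spectrum of $\opA$; and (iii) convert the walk into the time evolution $\eu^{-\im \opA \tau}$, controlling the error to accuracy $\epsilon$. Throughout, the space of dimension $N^p$ will enter only through the $pn$-qubit index registers, so it will appear as a single factor $p\log N$ in the gate count and not at all in the query count.

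For (i), I would use the sparse-structure oracle $g_A$ and the entry oracle to produce, for each row $j$, the normalised state $\ket{\phi_j}$ whose amplitude on $\ket{g_A(j,l)}\ket{j}$ is $\sqrt{\overline{A_{j,g_A(j,l)}}/(s_A\Vert\opA\Vert_{\max})}$ (padded on an extra register to unit norm), using $\Ord{1}$ oracle queries and $\tOrd{p\log N}$ gates for index arithmetic; the walk operator is then the standard reflection-times-swap $W=S(2\Pi-\mathcal{I})$, where $\Pi$ projects onto the span of the $\ket{\phi_j}$. For (ii), one checks that on the relevant invariant two-dimensional subspaces $W$ has eigenvalues $\eu^{\pm \im \arcsin(\lambda_k(\opA)/(s_A\Vert\opA\Vert_{\max}))}$ (up to the conventions of \cite{Berry2012}), so the eigenphases of $W$ encode those of $\opA$ up to the $\arcsin$ nonlinearity and the rescaling by $s_A\Vert\opA\Vert_{\max}$.

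For (iii), the $\arcsin$ is removed either by phase estimation — measure the walk eigenphase, compute $\opA$'s eigenphase, apply it, uncompute — or by the segmented-linearization trick of \cite{Berry2012}. The intrinsic cost of covering the evolution time is $\Ord{s_A\Vert\opA\Vert_{\max}\tau}$ walk steps, and suppressing the $\arcsin$/phase-estimation error down to $\epsilon$ uniformly over the spectrum costs a further $\Ord{\Lambda_A\tau/\sqrt{\epsilon}}$ steps, so that adding the two contributions gives $\Ord{c_A\tau}$ queries with $c_A=\Lambda_A/\sqrt{\epsilon}+\Vert\opA\Vert_{\max}s_A+1$; since $\Vert\opA\Vert_{\max}\le\Lambda_A$ one may relax this to $c_A=\Ord{\Lambda_A s_A/\sqrt{\epsilon}}$. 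Each walk step costs $\Ord{1}$ queries and $\tOrd{p\log N}$ elementary gates, with the logarithmic phase-estimation overhead absorbed into the $\tilde{\mathcal{O}}$, which yields the stated gate count $\tOrd{g_A\tau p\log N}$. The hard part — and the reason I would ultimately just cite \cite{Berry2012} — is the quantitative error analysis in step (iii): proving the $\arcsin$ correction is uniformly accurate over the whole spectrum and that the precision dependence is the claimed $1/\sqrt{\epsilon}$ rather than $1/\epsilon$ requires the careful bookkeeping of the walk construction and the amplitude-amplified correction carried out there. Everything else — building the walk from the oracles, diagonalising the $2\times 2$ blocks, and tracking the index registers — is routine. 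The later exponentially-precise methods of \cite{Childs2015,Low2016qubitization,Low2017spectral} would replace $1/\sqrt{\epsilon}$ by $\log(1/\epsilon)$, but are not needed for the resource estimates used here.
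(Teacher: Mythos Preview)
Your sketch of the Berry--Childs quantum-walk construction is accurate, but note that the paper does not actually prove this theorem: it is stated in the appendix purely as a cited result from \cite{Berry2012}, with no accompanying argument. So your proposal goes beyond what the paper does --- the paper treats the statement as a black box, while you have (correctly) outlined the walk-based construction, the $\arcsin$ spectral relation, and the phase-correction step that yield the $c_A = \Lambda_A/\sqrt{\epsilon} + \Vert\opA\Vert_{\max} s_A + 1$ query count. Your instinct that the nontrivial content lies in the quantitative $1/\sqrt{\epsilon}$ error analysis, and that one would ultimately defer to \cite{Berry2012} for it, matches exactly how the paper handles the result.
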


We use this result to simulate the matrices that occur in the problem of enacting the gradient and Hessian operators. 

\begin{lemma}\label{Lm:AandMD}
Let the derivative auxiliary operator $\opMD$ be given by Eq.~(\ref{eqMD}), with the vector space dimension $N^p$. Let $ \Vert \opA \Vert \leq \Lambda_A$ and $ \Lambda_A \geq 1$. There exists a
 quantum computation using Oracles (\ref{oracleA}) and (\ref{oracleSparse}) such that
$\eu^{-\im \opMD \tau}$ can be simulated. For simulating a time $\tau$ to accuracy $\epsilon$, we require $\Ord{p^3 \Lambda_A s_A \tau^2/\epsilon}$ queries to the oracles for $\opA$ and $\tOrd{p^4 \Lambda_A s_A \tau^2 \log N/\epsilon}$ quantum gates.
\end{lemma}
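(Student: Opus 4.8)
The plan is to recognize $\opMD$ as a sum of $p$ permuted copies of $\opA$ itself and then invoke the sparse Hamiltonian simulation result, Theorem~\ref{thmHS}, after assembling oracles for $\opMD$ from Oracles~(\ref{oracleA}) and~(\ref{oracleSparse}).

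First I would prove the structural identity $\opMD=\sum_{j=1}^{p}\Pi_j\,\opA\,\Pi_j^{\dagger}$, where $\Pi_j$ is the unitary permuting the $p$ tensor factors of the $N^p$-dimensional space so as to move factor $j$ into the last slot (a fixed cyclic rotation of the register labels $j,j+1,\dots,p$, each register carrying $n=\log N$ qubits). This is checked termwise against~(\ref{Eq:dectensA}): conjugating $\opA_1^\alpha\otimes\cdots\otimes\opA_p^\alpha$ by $\Pi_j$ gives $\big(\bigotimes_{i\neq j}\opA_i^\alpha\big)\otimes\opA_j^\alpha$, and summing over $\alpha$ and $j$ reproduces~(\ref{eqMD}); the ordering within the first $p-1$ factors is immaterial, so any fixed such $\Pi_j$ works. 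Crucially $\Pi_j$ does not depend on $\alpha$, so the argument never uses the decomposition~(\ref{Eq:dectensA}) itself (which may be intractable) — it only ever applies the fixed rewiring $\Pi_j$ before calling the $\opA$ oracles. Since conjugation by a permutation matrix preserves the sparsity pattern and the spectrum, each $\Pi_j\opA\Pi_j^\dagger$ is $s_A$-sparse with $\Vert\,\cdot\,\Vert\le\Lambda_A$ and $\Vert\,\cdot\,\Vert_{\max}\le\Lambda_A$; hence $\opMD$ is $p s_A$-sparse with $\Vert\opMD\Vert\le p\Lambda_A$ and $\Vert\opMD\Vert_{\max}\le p\Lambda_A$.

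Next I would build the two oracles Theorem~\ref{thmHS} needs for $\opMD$ by composing the $\opA$ oracles with the permutations $\Pi_j$. For the sparse-access oracle, decode a nonzero index $l\in\{1,\dots,p s_A\}$ as a pair $(j,l')$ with $l'\in\{1,\dots,s_A\}$, relabel the row register by $\Pi_j$, call Oracle~(\ref{oracleSparse}), and relabel back: one query to Oracle~(\ref{oracleSparse}) plus $\Ord{p\,n}$ rewiring gates (any duplicate or zero entries produced this way are harmless for the simulation). For the matrix-element oracle, note that the $(k,k')$-entry of $\opMD$ is the sum over $j$ of the corresponding entry of $\Pi_j\opA\Pi_j^\dagger$, obtained with $\Ord{p}$ queries to Oracle~(\ref{oracleA}) and $\Ord{p\,n_\chi}$ arithmetic. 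Feeding $\opMD$ (dimension $N^p$, sparsity $p s_A$, norm $\le p\Lambda_A$) into Theorem~\ref{thmHS} simulates $\eu^{-\im\opMD\tau}$ to accuracy $\epsilon$ with $\Ord{c_{\opMD}\tau}$ queries to the $\opMD$-oracles and $\tOrd{c_{\opMD}\tau\,p\log N}$ gates, where $c_{\opMD}=\Ord{p^2\Lambda_A s_A/\sqrt\epsilon}$; expanding each $\opMD$-query into $\Ord{p}$ queries to the $\opA$-oracles plus $\Ord{p\,n}$ gates gives $\Ord{p^3\Lambda_A s_A\,\tau/\sqrt\epsilon}$ queries to the oracles for $\opA$ and $\tOrd{p^4\Lambda_A s_A\,\tau\log N/\sqrt\epsilon}$ gates, which we relax to the stated $\Ord{p^3\Lambda_A s_A\tau^2/\epsilon}$ and $\tOrd{p^4\Lambda_A s_A\tau^2\log N/\epsilon}$ in the regime $\tau\gtrsim1$, $\epsilon\lesssim1$ (the looser $\tau^2/\epsilon$ form being chosen to match the sample-based estimates used elsewhere in the paper).

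The one conceptual obstacle is the first step: because the Schmidt-type decomposition~(\ref{Eq:dectensA}) — and hence the blocks $\opA_i^\alpha$ — is not assumed computable, it is not obvious a priori that $\opMD$ is oracle-accessible at all; the resolution is the observation that $\opMD$ is nothing but a sum of tensor-factor permutations of $\opA$, so the given $\opA$ oracles suffice once composed with a fixed cheap rewiring. Everything afterwards — transfer of sparsity and norm under conjugation by a permutation, assembly of the two $\opMD$-oracles, and propagating the counts through Theorem~\ref{thmHS} while tracking the $\Ord{p}$ oracle-composition and $\Ord{p\,n}$ rewiring overheads — is routine bookkeeping.
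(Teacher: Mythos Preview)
Your structural identity $\opMD=\sum_{j}\Pi_j\opA\Pi_j^\dagger$ is exactly the one the paper establishes (its $Q_j$ are your $\Pi_j$). Where you diverge is in how you exploit it: the paper does \emph{not} build oracles for $\opMD$ and call Theorem~\ref{thmHS} once. Instead it Trotterizes the sum, writing
\[
\eu^{-\im\opMD\Delta t}\approx\prod_{j=1}^{p}Q_j\,\eu^{-\im\opA\Delta t}\,Q_j^\dagger
\]
with Lie-product error $\Ord{p^2\Delta t^2}$, simulates each factor by sandwiching a call to the $\opA$-simulator between explicit swap networks, and then takes $m=\Ord{p^2\tau^2/\epsilon}$ such slices. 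The $\tau^2/\epsilon$ in the lemma is thus intrinsic to their method, not a relaxation. Your route---assembling a sparse-access oracle and a matrix-element oracle for $\opMD$ from Oracles~(\ref{oracleA}) and~(\ref{oracleSparse}) via the fixed rewirings $\Pi_j$, and then invoking Theorem~\ref{thmHS} directly on the $ps_A$-sparse $\opMD$---is correct and genuinely tighter, giving $\tau/\sqrt\epsilon$ before you loosen it. Indeed the paper explicitly flags this as a possible improvement in the closing sentence of its proof (``improved methods for simulating $\opMD$ may be found via efficiently computing the sparsity function $g_{\opMD}(k,l)$ from the sparsity function of $\matA$''); you have essentially carried that remark out. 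The trade-off: their Trotter argument never needs to argue that duplicate column indices are harmless for Berry--Childs, nor to do the $p$-fold arithmetic summation inside a matrix-element oracle, so it is slightly cleaner at the expense of the quadratically worse $\epsilon$-dependence.
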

\begin{proof}
 From Theorem \ref{thmHS}, $\eu^{-\im \matA \tau}$ can be simulated to error $\epsilon$ in $\Ord{{\Lambda_A s_A \tau}/{\sqrt \epsilon}}$ queries to the oracles for $\opA$ and using $\tOrd{\Lambda_A s_A  \tau\ p \log(N)/\sqrt \epsilon}$ gates. The task is to efficiently simulate $\eu^{-\im \opMD \tau}$.
The matrix $\matA$ can be expressed as:
\begin{equation}
\matA = \sum_{\alpha=1}^K \matA^{\alpha}_1 \otimes \cdots \otimes \matA^{\alpha}_p.
\end{equation}
Note that this tensor decomposition is formal but not explicitly required for simulating the gradient matrix $\opMD$.
The gradient matrix is given by Eq.~(\ref{eqMD}), where we can simply interchange the summations to obtain:
\begin{equation}
 \opMD =  \sum\limits^p_{j=1}  \sum\limits_{\alpha=1}^K\left( \bigotimes^p_{{i=1 \atop i\neq j}} \opA_{i}^{\alpha} \right )\otimes \opA_{j}^{\alpha} =:  \sum\limits^p_{j=1}  M_j.
\end{equation}
For the discussion, we assume that $\bigotimes^p_{{i=1 \atop i\neq j}} \opA_{i}^{\alpha}$ is ordered such that the $p$-th matrix is swapped with the $j$-th matrix, which results in the same operator $\opD$, since the ordering is not important (after taking the partial trace).
Note that there exist (unitary) permutation matrices $Q_j$, such that
with $M_j =  Q_j \matA Q_j^\dagger$ we have
 \begin{equation}\label{eqPermSum}
\opMD = \sum_{j=1}^p Q_j \matA Q_j^\dagger.
 \end{equation}
Thus, the $M_j$ matrix elements can be obtained from $\matA$ matrix elements.
The permutation matrices are specified via
\begin{eqnarray} \label{eqPermutation}
\bra{i_1\cdots i_j \cdots i_p } M_j \ket{i_1'\cdots i_j' \cdots i_p'} =\\
\quad =\sum_{\alpha=1}^K (\matA^{\alpha}_1)_{i_1 i_1'} \cdots  (\matA^{\alpha}_p)_{i_j i_j'}  \cdots (\matA^{\alpha}_j)_{i_p i_p'}  \\
\quad =  \sum_{\alpha=1}^K (\matA^{\alpha}_1)_{i_1 i_1'} \cdots  (\matA^{\alpha}_j)_{i_p i_p'}  \cdots (\matA^{\alpha}_p)_{i_j i_j'}, \\
\quad = \bra{i_1\cdots i_p \cdots i_j } \matA \ket{i_1'\cdots i_p' \cdots i_j'},
\end{eqnarray}
to be
\begin{equation}
Q_j = \sum_{i_1,\dots,i_p=1}^N \ket{i_1\cdots i_j \cdots i_p } \bra{i_1\cdots i_p \cdots i_j }.
\end{equation}
To simulate a small step $\eu^{-\im \opMD \Delta t}$, use the Lie product formula for the sum in Eq.~(\ref{eqPermSum}),
 \begin{eqnarray}\label{eqMDSim}
\left \Vert \eu^{-\im \opMD \Delta t}  - \prod_{j=1}^p Q_j \eu^{-\im \matA \Delta t} Q_j^\dagger  \right \Vert =
\Ord{p^2 \Delta t^2}.
\end{eqnarray}
where the sum of $p$ terms leads to an error scaling as $p^2$, as discussed for example in \cite{Childs2017}.
Hence $\opMD$ can be simulated for $\Delta t$ via a sequence of simulations of $\matA$ sandwiched between the permutation matrices.  Each of the $p$ permutation matrices  $Q_j$ can be simulated by swapping the two $\log N$-qubit registers corresponding to the respective permutation, i.e.~involving $\log N$ swap operations. By Eq.~(\ref{eqMDSim}), a single step thus requires
$2 p \log N$ swap operations. Also, we require $p$ simulations of $\opA$ of time $\Delta t$.
The error of these simulations shall be $p \Delta t^2$ to accumulate to a total error $\Ord{p^2 \Delta t^2}$. Thus, for each simulation of $\opA$ we require $\Ord{{\Lambda_A s_A \Delta t}/{\sqrt{p \Delta t^2}}} \leq \Ord{\Lambda_A s_A}$ queries to the oracles for $\opA$ and using $\tOrd{\Lambda_A s_A  \Delta t\ p \log(N)/\sqrt{p \Delta t^2}} \leq \tOrd{\Lambda_A s_A p \log N}$ gates from Theorem \ref{thmHS}.

For multiple steps, 
 \begin{eqnarray}\label{eqMDSim}
\epsilon :=  \left \Vert \eu^{-\im \opMD \tau}  - \left( \prod_{j=1}^p Q_j \eu^{-\im \matA \frac{t}{m}} Q_j^\dagger \right )^m  \right \Vert,
\end{eqnarray}
using $\tau= m \Delta t $, we take $m=\Ord{p^2\tau^2/\epsilon}$ to achieve final accuracy $\epsilon$. This requires
in total $\Ord{m p \Lambda_A s_A}$ queries to the oracles for $\opA$ and $\tOrd{m p^2( \Lambda_A +1)s_A \log N} =\tOrd{m p^2 \Lambda_A s_A \log N}$ quantum gates, if $\Lambda_A \geq 1$.
\end{proof}
Recall that the simulation of $\opMD$ is a means to simulate the gradient operator $\opD$. The additional error incurred for $\opMD$ will be included in the error incurred by the quantum state exponentiation for simulating $\opD$, which will be discussed in \ref{App:SamBasedHam}. In addition, we note that improved methods for simulating $\opMD$ may be found via efficiently computing the sparsity function $g_{\opMD}(k,l)$ from the sparsity function of $\matA$.

We can provide a similar result for the operator appearing in Newton's method.
\begin{lemma}\label{Lm:AandHA}
Let the Hessian auxiliary operator part $\opM_{H_1}$ be given by Eq.~(\ref{eqMHA}), with the vector space dimension $N^p$. Let $ \Vert \opA \Vert \leq \Lambda_A$ and $ \Lambda_A \geq 1$. There exists an
efficient quantum computation using Oracles (\ref{oracleA}) and (\ref{oracleSparse}) such that
$\eu^{-\im \opM_{H_1} t}$ can be simulated.
For simulating a time $\tau$ to accuracy $\epsilon$, we require $\Ord{p^6 \Lambda_A s_A  \tau^2/\epsilon}$ queries to the oracles for $\opA$ and  $\tOrd{p^6 \Lambda_A  s_A \tau^2 \log N/\epsilon}$ quantum gates.
\end{lemma}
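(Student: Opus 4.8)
The plan is to follow the proof of Lemma~\ref{Lm:AandMD} almost verbatim: I would reduce the simulation of $\eu^{-\im\opM_{H_1}\tau}$ to the black-box sparse simulation of $\opA$ provided by Theorem~\ref{thmHS}, by exhibiting $\opM_{H_1}$ as a matrix whose sparsity pattern and matrix elements are efficiently computable from those of $\opA$. The only genuinely new feature compared with $\opMD$ is the swap $S$ that sits inside each summand of Eq.~(\ref{eqMHA}), and the bulk of the work is to show that it does not spoil this sparse structure.

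First I would eliminate the central swap. Using the identity $S(\mathcal{I}\otimes\opA^{\alpha}_j)=(\opA^{\alpha}_j\otimes\mathcal{I})S$ for the swap of two $N$-dimensional registers, the bracket in Eq.~(\ref{eqMHA}) becomes $(\mathcal{I}\otimes\opA^{\alpha}_k)S(\mathcal{I}\otimes\opA^{\alpha}_j)=(\opA^{\alpha}_j\otimes\opA^{\alpha}_k)S$, so that, exactly as in Eq.~(\ref{eqPermSum}),
\[
\opM_{H_1}=\sum_{j\neq k}^{p}\Big(\sum_{\alpha=1}^{K}\left(\bigotimes_{i\neq j,k}^{p}\opA^{\alpha}_i\right)\otimes\opA^{\alpha}_j\otimes\opA^{\alpha}_k\Big)\left(\mathcal{I}^{\otimes(p-2)}\otimes S\right)=\sum_{j\neq k}^{p}Q_{jk}\,\opA\,Q_{jk}^{\dagger}\left(\mathcal{I}^{\otimes(p-2)}\otimes S\right),
\]
where $Q_{jk}$ is the register permutation that brings $\opA^{\alpha}_j,\opA^{\alpha}_k$ into the last two of the $p$ slots. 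Because $S^{\dagger}=S$, the $\opA^{\alpha}_i$ are symmetric, and the sum runs over ordered pairs $(j,k)$, one checks $\opM_{H_1}^{\dagger}=\opM_{H_1}$. Each ordered-pair block $Q_{jk}\opA Q_{jk}^{\dagger}(\mathcal{I}^{\otimes(p-2)}\otimes S)$ is the product of an $s_A$-sparse permuted copy of $\opA$ (operator norm $\leq\Lambda_A$) with the $1$-sparse swap; hence it is $\Ord{s_A}$-sparse with norm $\Ord{\Lambda_A}$, so $\opM_{H_1}$ is $\Ord{p^{2}s_A}$-sparse with $\Vert\opM_{H_1}\Vert=\Ord{p^{2}\Lambda_A}$.

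With this structure I would build the two oracles that Theorem~\ref{thmHS} requires for $\opM_{H_1}$ out of Oracles~(\ref{oracleA}) and (\ref{oracleSparse}) for $\opA$. The sparse-input oracle decodes an index $\ell\in\{1,\dots,p(p-1)s_A\}$ into an ordered pair $(j,k)$ together with an index into the $s_A$ nonzeros of the corresponding permuted $\opA$, calls $g_A$ once, and undoes the permutation and the swap on the column register --- $\Ord{1}$ queries to $\opA$'s oracles plus $\Ord{p\log N}$ reversible arithmetic. The matrix-element oracle sums the $\Ord{p^{2}}$ permuted entries of $\opA$, i.e., it costs $\Ord{p^{2}}$ queries to Oracle~(\ref{oracleA}) per element of $\opM_{H_1}$. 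Feeding the $N^{p}$-dimensional matrix $\opM_{H_1}$, of sparsity $\Ord{p^{2}s_A}$ and norm $\Ord{p^{2}\Lambda_A}$, into Theorem~\ref{thmHS} costs $\Ord{p^{4}\Lambda_A s_A\tau/\sqrt{\epsilon}}$ queries to the $\opM_{H_1}$-oracle; multiplying by the $\Ord{p^{2}}$ overhead of the element oracle, and bounding $\tau/\sqrt{\epsilon}\leq\tau^{2}/\epsilon$ (valid for $\tau\geq\sqrt{\epsilon}$, which covers the simulation times used in the main text), yields the claimed $\Ord{p^{6}\Lambda_A s_A\tau^{2}/\epsilon}$ queries to $\opA$ and, after also accounting for the $\Ord{p\log N}$ register width and the oracle arithmetic, $\tOrd{p^{6}\Lambda_A s_A\tau^{2}\log N/\epsilon}$ gates. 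As an alternative route staying closer to Lemma~\ref{Lm:AandMD}, one may Lie--Trotter split $\opM_{H_1}$ over its $\Ord{p^{2}}$ Hermitian blocks $Q_{jk}\opA Q_{jk}^{\dagger}(\mathcal{I}\otimes S)+(\mathcal{I}\otimes S)Q_{jk}\opA Q_{jk}^{\dagger}$ and simulate each block, given as an $\Ord{s_A}$-sparse matrix, by reusing the simulation of $\opA$; this likewise gives a bound polynomial in $p,s_A,\Lambda_A,\tau,1/\epsilon$.

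The parts I have only indicated --- the Trotter-error accounting and the exact powers of $p$ in the oracle-composition bookkeeping --- are routine. The one point that genuinely has to be checked, and the only place this argument differs from the gradient-operator proof of Lemma~\ref{Lm:AandMD}, is the treatment of the central swap $S$: establishing $(\mathcal{I}\otimes\opA^{\alpha}_k)S(\mathcal{I}\otimes\opA^{\alpha}_j)=(\opA^{\alpha}_j\otimes\opA^{\alpha}_k)S$, and then confirming that the resulting $\opM_{H_1}$ is Hermitian and that its sparsity and norm exceed those of $\opA$ only by a factor $\Ord{p^{2}}$. Everything else transcribes directly.
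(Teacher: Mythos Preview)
Your proof is correct, but your primary route is not the paper's. The paper writes $\opM_{H_1}=\sum_{j\neq k}H_{jk}$, computes that each $H_{jk}$ has matrix elements equal to entries of $\opA$ at permuted row and column indices (the swap $S$ showing up as an asymmetry between the row and column permutations, Eq.~(\ref{eqPermutation2})), and then Lie--Trotterizes over these $\Ord{p^2}$ pieces exactly as in Lemma~\ref{Lm:AandMD}: first-order Trotter error over $p^2$ terms is $\Ord{p^4\Delta t^2}$, giving $m=\Ord{p^4\tau^2/\epsilon}$ steps, each calling $p^2$ short simulations of $\opA$ sandwiched between register permutations --- hence the $p^6\tau^2/\epsilon$ query count. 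Your main argument instead uses the identity $(\mathcal{I}\otimes\opA^\alpha_k)S(\mathcal{I}\otimes\opA^\alpha_j)=(\opA^\alpha_j\otimes\opA^\alpha_k)S$ to build the full sparse-access oracles for $\opM_{H_1}$ directly from those of $\opA$ and applies Theorem~\ref{thmHS} once to the $\Ord{p^2s_A}$-sparse, norm-$\Ord{p^2\Lambda_A}$ matrix $\opM_{H_1}$, with no Trotter splitting at all. This actually earns you the tighter dependence $\Ord{p^6\Lambda_A s_A\tau/\sqrt{\epsilon}}$, which you then somewhat artificially relax to $\tau^2/\epsilon$ to match the stated bound; it also cleanly sidesteps a point the paper passes over, namely that the individual $H_{jk}$ are not Hermitian ($H_{jk}^\dagger=H_{kj}$) and hence cannot literally be exponentiated by conjugating $\eu^{-\im\opA\Delta t}$ with a single register permutation as was done for the $M_j$ in Lemma~\ref{Lm:AandMD}. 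Your ``alternative route'' via a Lie--Trotter split over the Hermitian pairs $H_{jk}+H_{kj}$ is essentially the paper's argument.
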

\begin{proof}
Note that the operator can be written as
\begin{equation}
\opM_{H_1} =  \sum_{ j\neq k}^p H_{jk},
\end{equation}
with
\begin{eqnarray}\label{eqHjk}
H_{jk}= \sum_{\alpha=1}^K  \left(\bigotimes_{i\neq j,k}^p \mathbf{A}^{\alpha}_i \right) \otimes
 \left[ \left(\mathcal{I}\otimes \matA^{\alpha}_k \right) S \left(\mathcal{I}\otimes \matA^{\alpha}_j \right) \right].
\end{eqnarray}
Note further that we can relate each $H_{jk}$ to the matrix $\opA$, (for $j<k$)
\begin{eqnarray} \label{eqPermutation2}
 \bra{i_1\cdots i_j\cdots i_k \cdots i_{p-1} i_p } H_{jk} \ket{i_1'\cdots i_j' \cdots i_k' \cdots i_{p-1}' i_p'} =\\
 \quad =\sum_{\alpha=1}^K (\matA^{\alpha}_1)_{i_1 i_1'} \cdots  (\matA^{\alpha}_{p})_{i_j i_j'}
  \cdots  (\matA^{\alpha}_{p-1})_{i_k i_k'}  \cdots (\matA^{\alpha}_k)_{i_p i_{p-1}'} (\matA^{\alpha}_j)_{i_{p-1} i_p'} \\
  \quad =
\bra{i_1\cdots i_{p-1}\cdots i_{p} \cdots i_{k} i_j } \opA \ket{i_1'\cdots i_p \cdots i_{p-1}' \cdots i_{k}' i_j'}   .
\end{eqnarray}
The permutation between the $p-1$ and $p$ part is owed to the additional swap matrix in Eq.~(\ref{eqHjk}). Each required permutation to relate $\opM_{H_1}$ to $\opA$  can be performed with $2 \log N$ swap operators. We use the Lie product formula as in Lemma \ref{Lm:AandMD},
 to simulate $\eu^{-\im \opM_{H_1} \Delta t}$ efficiently with error $\Ord{p^4 \Delta t^2}$ using $p^2$ small-time simulations of $\opA$.
 The error of the simulations of $\opA$ shall be $p^2 \Delta t^2$ to accumulate to a total error $\Ord{p^4 \Delta t^2}$. Thus, for each simulation of $\opA$, we require $\Ord{{\Lambda_A s_A \Delta t}/{\sqrt{p^2 \Delta t^2}}} \leq \Ord{\Lambda_A s_A}$ queries to the oracles for $\opA$ and using $\tOrd{\Lambda_A s_A  \Delta t\ p \log(N)/\sqrt{p^2 \Delta t^2}} \leq \tOrd{\Lambda_A s_A \log(N)}$ gates \cite{Berry2012}.
 We also have $2 p^2 \log N$ swap operations for a small time step.

 For multiple steps $m=p^4 \tau^2/\epsilon$ to given $\epsilon$, we thus require
 $\Ord{m p^2 \Lambda_A s_A}$ queries to the oracles of $\opA$ and $\tOrd{m (\Lambda_A +1)s_A\ p^2 \log N} \leq \tOrd{m \Lambda_As_A\ p^2 \log N}$ quantum gates, if $\Lambda_A \geq 1$.
\end{proof}
\section{Sample-based Hamiltonian simulation with erroneous Hamiltonian}\label{App:SamBasedHam}

In this work, we perform Hamiltonian simulation with a Hamiltonian that is defined by quantum states that exhibit an error. Concretely, we want to simulate the Hamiltonian
$\opD = \mathrm{tr}_{1\dots p-1} \left \{\left (\ketbra{ \vecx}{ \vecx}^{\otimes (p-1)}\otimes \mathcal{I} \right )\;  \opMD \right \}  $ using a Trotter decomposition into small steps. But instead of the exact steps, we simulate small steps with an error, i.e.~we use the Hamiltonian $ \opD_l= \mathrm{tr}_{1\dots p-1} \left \{ \left(\bigotimes_{k=1}^{p-1}\ketbra{\tilde \vecx_{lk}}{\tilde \vecx_{lk}}\otimes \mathcal{I}\right) \;  \opMD \right \} $. Here, $ \ket{\tilde \vecx_{lk}} $ are the erroneous samples.
We assume in this section that the errors are random vectors.
We also assume here that the errors are small, i.e.~the norm of the vectors is $\ll 1$ and consider contributions to leading order in the norm.
For the erroneous Hamiltonian simulation we obtain the following result.
\begin{lemma}[Erroneous sample-based Hamiltonian simulation of $\opD$]\label{lemmaSampleD}
  Given the desired Hamiltonian $\opD$ (with $\Vert \opD \Vert \leq \Lambda_D=\Ord{p \Lambda_A}$) and the actual Hamiltonians  $\opD_l$ arising from erroneous samples
  $\ket{\tilde \vecx_{lk}} $, $k=1,\dots, p$. Here, $l=1,\cdots,m$, where $m$ is the number of time steps. Assume that the samples are given with independent, bounded, random errors, i.e.~$  \ket{\tilde \vecx_{lk}} - \ket{ \vecx}= \vec v_{lk}$
  such that $\left \vert  \vec v_{lk}\right \vert_2  \ll 1$, $l = 1,\dots,m$, $k = 1,\dots,p$, as in Assumption \ref{assumeError}, where the norm is the standard Euclidian norm.
  Using the sample-based Hamiltonian simulation scheme of~\cite{Lloyd14} and the matrix simulation of $\opM_D$ described in Lemma \ref{Lm:AandMD}, we can perform Hamiltonian simulation of $\opD$. The simulation of $\eu^{-\im \opD \tau}$ for a time $\tau$ and desired error $\epsilon$ can be performed
with $m = \Ord{ p^4 \Lambda_D^2  \frac{\tau^2}{\epsilon^{2}}  } $ time steps. The number of samples needed at each time step is $p$.
The algorithm uses $\Ord{p^4 \Lambda_D^3 \tau^2  s_A/\epsilon^{2} }$ queries to the oracle for $\opA$ and $\tOrd{p^5 \Lambda_D^3 \tau^2 s_A \log N / \epsilon^{2} }$ quantum gates.
\end{lemma}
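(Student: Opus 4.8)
The plan is to realise $\eu^{-\im\opD\tau}$ as a composition of $m$ infinitesimal sample-based steps of duration $\Delta t=\tau/m$, exactly as in the perfect-state construction behind Result~\ref{resultSimDError}, and then to control how the input errors $\vec v_{lk}$ propagate through the $m$ compositions. First I would recall that one \emph{ideal} step is realised by applying $\eu^{-\im\opMD\Delta t}$ — simulatable by Lemma~\ref{Lm:AandMD} — to $\rho^{\otimes(p-1)}\otimes\sigma$ and tracing out the first $p-1$ registers, yielding the channel $\sigma\mapsto\eu^{-\im\opD\Delta t}\sigma\,\eu^{\im\opD\Delta t}+\Ord{\Lambda_D^2\Delta t^2}$ as in Eq.~(\ref{Eq:pca}). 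Feeding the noisy copies $\ket{\tilde\vecx_{lk}}$ into step $l$ instead of $\ket\vecx$ replaces $\opD$ by $\opD_l=\mathrm{tr}_{1\dots p-1}\{(\bigotimes_{k=1}^{p-1}\ketbra{\tilde\vecx_{lk}}{\tilde\vecx_{lk}}\otimes\mathcal I)\opMD\}$. Writing $\ketbra{\tilde\vecx_{lk}}{\tilde\vecx_{lk}}=\rho+\delta\rho_{lk}$ with $\delta\rho_{lk}=\ket{\vec v_{lk}}\bra\vecx+\ket\vecx\bra{\vec v_{lk}}+\Ord{|\vec v_{lk}|^2}$ and, as the statement permits, keeping only leading order in the error norm, one obtains $\opD_l=\opD+\delta\opD_l$, with $\delta\opD_l$ linear in the $\vec v_{lk}$ and $\Vert\delta\opD_l\Vert$ controlled by bounding each of the $p-1$ contributions via $\Vert\delta\rho_{lk}\Vert_1\Vert\opMD\Vert$ together with $\Vert\opMD\Vert\leq p\Lambda_A=\Ord{\Lambda_D}$ from Eq.~(\ref{eqPermSum}).

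Next comes the error bookkeeping for one step, organised as in Eq.~(\ref{eqPCAErrorState}). The simulation error $\mathcal E_{\opMD}$ of $\eu^{-\im\opMD\Delta t}$ and the intrinsic sample-based error $\mathcal E_{\rm samp}$ are both deterministic and of order $\Delta t^2$, bounded together by $\Ord{p^2\Lambda_D^2\Delta t^2}$ as in Result~\ref{resultSimDError}; the new term $\mathcal E_{\ket\vecx}$ enters only at \emph{first} order in $\Delta t$, essentially as the superoperator $-\im\Delta t[\delta\opD_l,\,\cdot\,]$. Composing the $m$ noisy steps and Taylor expanding to first order in $\Delta t$, the deviation from the ideal evolution takes the telescoping form $\sum_{l=1}^m\Phi^{m-l}\circ(-\im\Delta t[\delta\opD_l,\,\cdot\,])\circ\Phi^{l-1}+(\text{second order})$, where $\Phi$ is the ideal one-step channel and $\Vert\Phi^j\Vert=1$. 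Bounding this first-order term by the triangle inequality would give a useless $\Ord{\mathrm{poly}(p)\,\Lambda_D\tau}$; the essential point, flagged after Eq.~(\ref{eqPCAErrorState}) and following \cite{Lloyd14,Hoeffding1994}, is that under Assumption~\ref{assumeError} the $\delta\opD_l$ are independent and mean-zero to leading order, since $\mathbb E[\vec v_{lk}]=0$ forces $\mathbb E[\delta\opD_l]=0$. Thus the first-order term is a sum of $m$ independent, zero-mean, operator-valued contributions and, by an operator Hoeffding/Bernstein concentration — or by the central-limit theorem for weakly dependent variables of \cite{Hoeffding1994}, should one wish to absorb the mild dependence that sequential composition introduces — it concentrates at scale $\Ord{\sqrt m}$ times the per-step size rather than $\Ord{m}$ times it, i.e.\ at scale $\Ord{\mathrm{poly}(p)\,\Lambda_D\tau/\sqrt m}$ with high probability.

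Having both contributions, I would balance them against the target accuracy $\epsilon$: the deterministic part requires $m=\Omega(p^2\Lambda_D^2\tau^2/\epsilon)$, while the concentrated first-order part requires $m=\Omega(\mathrm{poly}(p)\,\Lambda_D^2\tau^2/\epsilon^2)$, and carrying the polynomial-in-$p$ factors carefully — the factor from the $p-1$ trace-out registers in the variance proxy, together with the $\opMD$-norm and Lie-product overheads — pins this down to $m=\Ord{p^4\Lambda_D^2\tau^2/\epsilon^2}$, which dominates for $\epsilon<1$. This is exactly the advertised degradation $1/\epsilon\to1/\epsilon^2$ of Result~\ref{resultSimDError} relative to the perfect-state case, now seen as the $\sqrt m$ (rather than $1/m$) accumulation of state errors. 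Each step consumes $p-1=\Ord{p}$ fresh copies of $\ket\vecx$, giving $\Ord{p^5\Lambda_D^2\tau^2/\epsilon^2}$ copies in total. Finally, since each step only needs $\eu^{-\im\opMD\Delta t}$ to accuracy $\Ord{p^2\Delta t^2}$, Lemma~\ref{Lm:AandMD} gives $\Ord{p\Lambda_A s_A}=\Ord{\Lambda_D s_A}$ oracle queries and $\tOrd{p^2\Lambda_A s_A\log N}=\tOrd{p\Lambda_D s_A\log N}$ gates per step (using $\Lambda_A=\Ord{\Lambda_D/p}$), and multiplying by $m$ yields $\Ord{p^4\Lambda_D^3\tau^2 s_A/\epsilon^2}$ queries and $\tOrd{p^5\Lambda_D^3\tau^2 s_A\log N/\epsilon^2}$ gates.

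The step I expect to be the main obstacle is making the concentration argument rigorous and extracting the correct power of $p$. One must (i) genuinely verify that the leading-order perturbation $\delta\opD_l$ is mean-zero, so only its fluctuations survive; (ii) handle the fact that the $m$ one-step channels are composed, not added, so that the relevant random objects are the conjugated superoperators $\Phi^{m-l}\circ[\delta\opD_l,\,\cdot\,]\circ\Phi^{l-1}$, whose independence is only inherited from that of the $\vec v_{lk}$; and (iii) apply an operator-valued Hoeffding/Bernstein inequality, or the dependent-variable central-limit theorem of \cite{Hoeffding1994}, with the right variance proxy — it is this variance estimate, carrying the extra polynomial factors of $p$ from the $p-1$ resource registers and the $\opMD$ simulation, that fixes the $p^4$ in $m$. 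The Taylor expansion of the composed channels and the substitution into Lemma~\ref{Lm:AandMD} are then routine.
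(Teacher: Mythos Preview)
Your proposal is correct and follows essentially the same route as the paper: expand $\opD_l=\opD+\delta\opD_l$ to first order in the state errors, observe that the $\Ord{\Delta t^2}$ contributions (from $\opMD$ simulation and sample-based exponentiation) accumulate deterministically to $\Ord{p^2\Lambda_D^2\tau^2/m}$, while the $\Ord{\Delta t}$ state-error contributions are mean-zero and, by a central-limit/concentration argument, accumulate only as $\Ord{p^2\Lambda_D\tau/\sqrt m}$; balancing both against $\epsilon$ yields $m=\Ord{p^4\Lambda_D^2\tau^2/\epsilon^2}$, and the query and gate counts follow by multiplying the per-step costs from Lemma~\ref{Lm:AandMD}.

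The only notable stylistic differences are that the paper works at the level of unitaries, bounding $\big\Vert\eu^{-\im\opD\tau}-\prod_l\eu^{-\im\opD_l\tau/m}\big\Vert$ directly via a Lie-product splitting and then reducing the leading term to a sum of \emph{scalar} random variables $Y_l^\alpha=\mathrm{tr}\{\ket\vecx\vec v_{l1}^T\matA_1^\alpha\}/\Vert\matA_1^\alpha\Vert$ before invoking the ordinary central-limit theorem, whereas you phrase everything in terms of superoperators, keep $\delta\opD_l$ operator-valued, and appeal to an operator Hoeffding/Bernstein inequality. Your telescoping formulation is arguably cleaner about the fact that the random perturbations appear at different positions in a composition rather than in a plain sum, while the paper's reduction to scalar variables sidesteps the need for matrix concentration. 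Both extract the same $p^4$ from the two $\Ord p$ counts (the $p-1$ possible erroneous registers and the $p$ terms in $\opMD$).
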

We immediately see that for $\tau=1/\epsilon$ the required number of steps is hence given by
$\Ord{ \frac{p^4 \Lambda_D^2 }{\epsilon^4}}$. Then the total number of samples required
is $\Ord{ \frac{p^5\Lambda_D^2}{\epsilon^4}}$.
\begin{proof}
Define $\beta$ to be the maximum of the norms of the sample errors, $ \beta := \max_{l,k} \vert  \vec v_{lk}\vert  \ll 1$. As stated,  $\ket{\tilde \vecx_{lk}} = \ket{ \vecx} +\vec v_{lk}$, and
\begin{equation}\label{eqBraKetExpansion}
\ketbra{\tilde \vecx_{lk}}{\tilde \vecx_{lk}} = \ketbra{ \vecx} {\vecx} + \ket{ \vecx} \vec v_{lk}^T+  \vec v_{lk}\bra{\vecx} + \vec v_{lk} \vec v_{lk}^T.
\end{equation}
Note that $\left \vert \vec v_{lk} \vec v_{lk}^T \right \vert =\Ord{\beta^2}$. We first perform an expansion of the erroneous Hamiltonian using Eq.~(\ref{eqBraKetExpansion}) and collecting some of the $\Ord{\beta^2}$ terms
\begin{eqnarray}
\opD_l &=& \mathrm{tr}_{1\dots p-1}  \left \{ \left( \otimes_{k=1}^{p-1} \ketbra{\tilde{\vecx}_{lk}} {\tilde{\vecx}_{lk}} \otimes \mathcal I \right) \opMD \right \} \\
&=&
\mathrm{tr}_{1\dots p-1}  \left \{ \sum_{i=1}^{p-1} X_{li}\; \opMD \right \} + \Ord{p^2 \beta^2}, \nonumber
 \end{eqnarray}
 where
 $X_{li} := \ketbra{\vecx}{\vecx}^{\otimes i-1} \otimes \ketbra{\tilde \vecx_{li}} {\tilde \vecx_{li}} \otimes \ketbra{\vecx}{\vecx}^{\otimes (p-1)-i}\otimes \mathcal I$.
This can be further expanded to
\begin{equation}
  \label{eq:expansion_D}
 \opD_l = \opD +  \mathrm{tr}_{1\dots p-1} \left \{ \sum_{i=1}^{p-1} V_{li}\; \opMD \right \} + \Ord{p^2 \beta^2},
 \end{equation}
where  $V_{li} := \ketbra{\vecx}{\vecx}^{\otimes i-1} \otimes \left (\ket{\vecx} \vec v_{li}^T + \vec v_{li} \bra{\vecx} \right) \otimes \ketbra{\vecx}{\vecx}^{\otimes (p-1)-i}\otimes \mathcal I$.
 To order $\Ord{\beta}$ only one state is erroneous and we have $\Ord{p}$ terms with the error at a different location in the $p-1$ state register.
 Here, we can analyze the term where only the first register is erroneous and assume that the other terms are of the same size. That is we use
 \begin{eqnarray}
 \mathrm{tr}_{1\dots p-1} \left \{ \sum_{i=1}^{p-1} V_{li} \opMD \right \} = \Ord{ p} \mathrm{tr}_{1\dots p-1}  \left \{ \left( \ket{ \vecx}\vec v_{l1}^T  \otimes\left(\ketbra{ \vecx}{ \vecx}\right )^{\otimes (p-2)} \otimes \mathcal I \right)\;  \opMD \right \}.\nonumber \\
 \end{eqnarray}
 Continue as
 \begin{eqnarray}
 \opD_l &=& \opD + \Ord{ p} \mathrm{tr}_{1\dots p-1}  \left\{ \left ( \ket{ \vecx}\vec v_{l1}^T  \otimes\left(\ketbra{ \vecx}{ \vecx}\right )^{\otimes (p-2)} \otimes \mathcal I \right) \;  \opMD \right \} +\Ord{\beta^2 p^2} \nonumber \\ &=:& \opD + \tilde \opD_l.
\end{eqnarray}
The zeroth order reproduces $\opD$ while $\tilde \opD_l$ is the erroneous part. Now we analyze the error term $\mathrm{tr}_{1\dots p-1} \left \{\left( \ket{ \vecx}\vec v_{l1}^T  \otimes\left(\ketbra{ \vecx}{ \vecx}\right )^{\otimes (p-2)}  \otimes \mathcal I \right) \;  \opMD \right \} $ as a representative of the other $\Ord{p}$ terms and using the actual matrix $\opM_D$. The general bound follows then by taking the maximum over the matrices $\matA_j^\alpha$. Continue as
\begin{eqnarray} 
\mathrm{tr}_{1\dots p-1} \left \{ \left ( \ket{ \vecx}\vec v_{l1}^T \otimes\left(\ketbra{ \vecx}{ \vecx}\right )^{\otimes (p-2)} \otimes \mathcal I \right ) \;    \sum\limits_{\alpha=1}^K \sum\limits^p_{j=1}\left( \bigotimes^p_{{i=1 \atop i\neq j}} \opA_{i}^{\alpha} \right )\otimes \opA_{j}^{\alpha}  \right \} =\nonumber \\ \quad \quad
= \Ord{p} \sum_{\alpha = 1}^K \mathrm{tr}\left \{\ket \vecx \vec v^T_{l1} \matA_1^\alpha \right \}   \prod_{i=2}^{p-1} \mathrm{tr}\left \{ \ketbra{ \vecx}{ \vecx} \matA_i^\alpha \right \}  \matA_p^{\alpha} \\
\quad \quad \equiv \Ord{p} \sum_{\alpha = 1}^K  Y_l^\alpha  \Vert \matA_1^\alpha \Vert  \prod_{i=2}^{p-1} \mathrm{tr}\left \{ \ketbra{ \vecx}{ \vecx} \matA_i^\alpha \right \}  \matA_p^{\alpha}.\label{eqErrorExplicit}
\end{eqnarray}
Here, we have another $\Ord{p}$ terms that are similar to the one explicitly written out.
Because $\vec v_{l1}$ is a random vector, we have random variables $Y_l^\alpha$ given by
\begin{equation}
\label{eqErrorRandomVariable}
Y_l^\alpha := \frac{\mathrm{tr}\{\ket \vecx \vec v_{l1}^T \matA_1^\alpha \} }{\Vert \matA_1^\alpha \Vert}.
\end{equation}
These random variables have support on $[-\infty, +\infty]$, a symmetric distribution and standard deviation $\Ord{\beta}$.
To get the final error we will use a central limit argument.

As we care about the quantum simulation of the Hamiltonian $\opD$, we need to evaluate the difference in desired and actual time evolution. Note Equation (\ref{eqPCAErrorState}) for the error of a single time step $\Delta t = t/m$:
\begin{equation}
\mathcal E' = \mathcal E_{\opMD} + \mathcal E_{\rm samp } + \mathcal E_{\ket \vecx }.
\end{equation}
As before, we can choose $\Vert \mathcal E_{\opMD} \Vert = \Ord{p^2 \Delta t^2}$, using $\Ord{p \Lambda_A  s_A }=\Ord{\Lambda_D  s_A}$ queries to the oracle for $\opA$ and using $\tOrd{p^2 \Lambda_A  s_A \log N}=\tOrd{p \Lambda_D  s_A \log N}$ gates,
see \ref{appendixSim}, Lemma \ref{Lm:AandMD}.
Regarding the error $\mathcal E_{\rm samp }$, its size is given by $\Vert\mathcal E_{\rm samp } \Vert = \Ord{\Lambda_D^2 \Delta t^2 }$ similar to \cite{Lloyd14}. The error for a small time step $\Delta t$ from these two sources can be upper bounded by $\Vert\mathcal E \Vert \leq \Vert\mathcal E_{\rm samp } \Vert + \Vert \mathcal E_{\opMD} \Vert = \Ord{p^2 \Lambda_D^2 \Delta t^2 }$. For $m$ steps this error is $m \Vert\mathcal E \Vert = \Ord{p^2 \Lambda_D^2 \tau^2/m }$

As in \cite{Childs2017}, define the remainder of an analytic function $f = \sum_{j=0}^\infty a_j x^j$ to be $\mathcal R_k(f) := \sum_{j=k+1}^\infty a_j x^j$, for $k \in \mathbbm N$. Then the error between desired and actual time evolution $\epsilon$ is
\begin{eqnarray}
\epsilon &\leq& \left \Vert \eu^{- \im\opD \tau}  - \prod_{l=1}^m \eu^{- \im \opD_l \tau/m} \right \Vert + m\Vert \mathcal E \Vert\\
&=&  \left \Vert \eu^{- \im\opD \tau}  - \prod_{l=1}^m \eu^{- \im (\opD + \tilde \opD_l) \tau/m} \right \Vert + m\Vert \mathcal E \Vert \\
&=&\left \Vert \eu^{- \im\opD \tau}  - \prod_{l=1}^m \eu^{- \im \opD \tau/m}\eu^{-\im \tilde \opD_l \tau/m} + \prod_{l=1}^m \eu^{- \im \opD \tau/m}\eu^{-\im \tilde \opD_l \tau/m}  - \prod_{l=1}^m \eu^{- \im (\opD + \tilde \opD_l) \tau/m} \right \Vert \nonumber \\
&& + m\Vert \mathcal E \Vert\\
&\leq&\left \Vert \eu^{- \im\opD \tau}  - \eu^{- \im \opD \tau} \prod_{l=1}^m \eu^{-\im \tilde \opD_l \tau/m} \right \Vert + \left \Vert  \prod_{l=1}^m  \eu^{- \im \opD \tau/m}\eu^{-\im \tilde \opD_l \tau/m}  - \prod_{l=1}^m  \eu^{- \im (\opD + \tilde \opD_l) \tau/m} \right \Vert \nonumber \\
&& + m\Vert \mathcal E \Vert \\
&\leq&\left \Vert \eu^{- \im\opD \tau}  - \eu^{- \im \opD \tau} \prod_{l=1}^m \eu^{-\im \tilde \opD_l \tau/m} \right \Vert + \Ord{\frac{\Lambda_D^2 \tau^2}{m}} + m\Vert \mathcal E \Vert \nonumber \\
&\leq&\left \Vert\mathcal I - \prod_{l=1}^m \left( \mathcal I - \im \frac{\tilde \opD_l \tau}{m} + \mathcal R_1\left( \eu^{-\im \tilde \opD_l \tau/m} \right) \right) \right \Vert + \Ord{\frac{p^2 \Lambda_D^2 \tau^2}{m}} \nonumber \\
&=&\left \Vert \sum_{l=1}^m \frac{\tilde \opD_l \tau}{m} \right \Vert +\Ord{\beta^2} + \Ord{\frac{p^2 \Lambda_D^2 \tau^2}{m}} \nonumber \\
&=& \Ord{ p^2}  \left \Vert \sum_{l=1}^m \sum_{\alpha = 1}^K  Y_l^\alpha  \Vert \matA_1^\alpha \Vert  \prod_{i=2}^{p-1} \mathrm{tr}\left \{ \ketbra{ \vecx}{ \vecx} \matA_i^\alpha \right \}  \matA_p^{\alpha} \right \Vert   \frac{\tau}{m}  \\ &&+  \Ord{p^2\beta^2} + \Ord{\frac{p^2 \Lambda_D^2 \tau^2 }{m}}. \nonumber
\end{eqnarray}
The last identity follows from Eq.~(\ref{eqErrorExplicit}), using the random variables defined in Eq.~(\ref{eqErrorRandomVariable}). Define 
 the random variables $Y_l:= K \Lambda_D \sum_{\alpha=1}^K  Y_l^\alpha $, which 
have support on $[-\infty, +\infty]$, a symmetric distribution and standard deviation $\Ord{K^2 \Lambda_D \beta}$.
 The leading term can be bounded as
\begin{eqnarray}
\left \Vert \sum_{l=1}^m  \sum_{\alpha = 1}^K  Y_l^\alpha  \Vert \matA_1^\alpha \Vert  \prod_{i=2}^{p-1} \mathrm{tr}\left \{ \ketbra{ \vecx}{ \vecx} \matA_i^\alpha \right \}  \matA_p^{\alpha} \right \Vert  
 &\leq& \left \vert \sum_{l=1}^m  Y_l \right \vert 
 \\
 &\leq& \Ord{K^2 \Lambda_D \beta  \sqrt{m}}.
\end{eqnarray}
Here, the central limit theorem allows to bound the size of the sum of the random variables $Y_l$ by $\Ord{\sqrt{m}}$ and the standard deviation $\Ord{K^2 \Lambda_D \beta}$. To simplify, we assume $K=\Ord{1}$. Putting all together, neglecting higher-order $\beta$ terms, we obtain the simulation error:
$\epsilon = \Ord{  p^2 \beta \Lambda_D  \frac{\tau}{\sqrt{m}} + p^2 \Lambda_D^2 \frac{\tau^2}{m} }$.
 Solving this for the number of samples
 $m$ gives
 \begin{equation}
m = \Ord{  \frac{ p^4 \Lambda_D^2  \tau^2}{\epsilon^2 } \chi}.
 \end{equation}
 with $\chi := (\beta^2 + 2 \epsilon/p^2 + \beta \sqrt{\beta^2+4\epsilon/p^2})/2\leq (1 + 2 + \sqrt{1+4})/2 < 3$.
 \end{proof}
The resulting dependencies can also be interpreted in comparison with black-box Hamiltonian simulation.  For black-box Hamiltonian simulation, the error in the simulated time-evolution grows only linearly in
time for an erroneous Hamiltonian~\cite{cubitt2017universal}, and hence one obtains a scaling of $\Ord{\tau/ \epsilon}$, while for the sample-based Hamiltonian scheme we have in general a dependency which is quadratically larger compared to the
black-box Hamiltonian simulation~\cite{kimmel2017hamiltonian}, which is in accordance with the error bounds we obtain here.

\begin{lemma}[Erroneous sample-based Hamiltonian simulation of $\matH_1$]\label{lemmaSampleH}
  Given the desired Hamiltonian $\matH_1$, with $\Vert \matH_1\Vert \leq \Vert \matH \Vert \leq \Lambda_H$, and the actual Hamiltonians  $(\matH_1)_l$ arising from erroneous samples.
  With the same setting as Lemma \ref{lemmaSampleD}, the simulation of $\eu^{-\im \matH_1 \tau}$ for a time $\tau$ and desired error $\epsilon$ can be performed, 
with $m = \Ord{ p^8 \Lambda_H^2  \frac{\tau^2}{\epsilon^{2}}  }$ time steps. The number of samples needed at each time step is $p$.
The algorithm uses $\Ord{p^8 \Lambda_H^3 \tau^2  s_A/\epsilon^{2} }$ queries to the oracle for $\opA$ and $\tOrd{p^{10} \Lambda_H^3 \tau^2 s_A \log N / \epsilon^{2} }$ quantum gates.
\end{lemma}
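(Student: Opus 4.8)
The plan is to repeat the proof of Lemma~\ref{lemmaSampleD} essentially verbatim, with the substitutions $\opMD\to\opM_{H_1}$ (Eq.~(\ref{eqMHA})), $\opD\to\matH_1$, $\Lambda_D\to\Lambda_H$, and with Lemma~\ref{Lm:AandHA} playing the role of Lemma~\ref{Lm:AandMD} for the sparse simulation of the auxiliary operator. Concretely, I would Trotter-decompose $\eu^{-\im\matH_1\tau}$ into $m$ time steps of length $\Delta t=\tau/m$, realize each step by the sample-based scheme of \cite{Lloyd14} by applying $\eu^{-\im\opM_{H_1}\Delta t}$ (simulated via Lemma~\ref{Lm:AandHA}) to $\rho^{\otimes(p-1)}\otimes\sigma$ and tracing out the first $p-1$ registers, which by the identity $\matH_1\sigma=\mathrm{tr}_{1}\{\opM_{H_1}(\rho^{\otimes p-1}\otimes\sigma)\}$ enacts the desired conjugation up to an error $\mathcal E'=\mathcal E_{\opM_{H_1}}+\mathcal E_{\rm samp}+\mathcal E_{\ket\vecx}$. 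Since the $p-1$ copies are drawn with errors, each step in fact exponentiates a perturbed operator $(\matH_1)_l$; expanding $\ketbra{\tilde\vecx_{lk}}{\tilde\vecx_{lk}}$ as in Eq.~(\ref{eqBraKetExpansion}) yields $(\matH_1)_l=\matH_1+\widetilde{(\matH_1)}_l+\Ord{\mathrm{poly}(p)\,\beta^2}$, with $\widetilde{(\matH_1)}_l$ collecting the terms linear in the error vectors $\vec v_{lk}$.

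The key steps are: (i) From Lemma~\ref{Lm:AandHA}, $\eu^{-\im\opM_{H_1}\Delta t}$ can be simulated with $\Vert\mathcal E_{\opM_{H_1}}\Vert=\Ord{p^4\Delta t^2}$ (the Lie product formula over the $\Ord{p^2}$ summands of $\opM_{H_1}$ produces the $p^4$), at a cost of $\Ord{p^2\Lambda_A s_A}$ oracle queries and $\tOrd{p^2\Lambda_A s_A\log N}$ gates per step, plus $\Ord{p^2\log N}$ swaps for the permutation network; together with the intrinsic sample-based error $\Vert\mathcal E_{\rm samp}\Vert=\Ord{\Lambda_H^2\Delta t^2}$ as in \cite{Lloyd14}, these contribute $m\,\Ord{p^4\Lambda_H^2\Delta t^2}=\Ord{p^4\Lambda_H^2\tau^2/m}$. (ii) For the state-error contribution one follows Eqs.~(\ref{eq:expansion_D})--(\ref{eqErrorExplicit}): reduce $\widetilde{(\matH_1)}_l$ to a representative term in which a single copy is erroneous; the bookkeeping differs only in that the defining sum of $\opM_{H_1}$ runs over the $\Ord{p^2}$ index pairs $j\neq k$ (rather than $\Ord{p}$ indices), and the swap $S$ places the erroneous copy into the sandwich $\matA^{\alpha}_k\,(\cdot)\,\matA^{\alpha}_j$, so the number of linear-in-$\beta$ terms carries an overall prefactor $\Ord{p^4}$ (against $\Ord{p^2}$ in Lemma~\ref{lemmaSampleD}). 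After normalising by the relevant matrix norms, each such term is a symmetric scalar random variable of standard deviation $\Ord{\beta}$, exactly like $Y^{\alpha}_l$ in Eq.~(\ref{eqErrorRandomVariable}). (iii) Feeding these into the telescoping/remainder estimate of Lemma~\ref{lemmaSampleD} and invoking the central limit theorem \cite{Hoeffding1994} for $\sum_{l=1}^m\widetilde{(\matH_1)}_l$ bounds its norm by $\Ord{p^4\Lambda_H\beta\sqrt m}$, so the total error is $\epsilon=\Ord{p^4\beta\Lambda_H\,\tau/\sqrt m+p^4\Lambda_H^2\tau^2/m}$. (iv) Solving for $m$ with $\beta\ll1$ and $\Lambda_H\geq1$, the dominant requirement is $m=\Ord{p^8\Lambda_H^2\tau^2/\epsilon^2}$, with $p$ samples consumed per step; multiplying by the per-step costs of Lemma~\ref{Lm:AandHA} and using $\Lambda_A=\Ord{\Lambda_H/p^2}$ reproduces the claimed $\Ord{p^8\Lambda_H^3 s_A\tau^2/\epsilon^2}$ oracle queries and $\tOrd{p^{10}\Lambda_H^3 s_A\tau^2\log N/\epsilon^2}$ gates (and $m=\Ord{p^8\Lambda_H^2/\epsilon^4}$ for $\tau=1/\epsilon$).

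The main obstacle is step (ii): in the swap-structured operator $\opM_{H_1}$ one must count carefully how many terms linear in the error vectors arise, bound each of their operator norms by the appropriate products of $\Vert\matA^{\alpha}_i\Vert$'s so that they can be repackaged into i.i.d.\ symmetric scalar random variables, and verify that the hypotheses of the central limit theorem of \cite{Hoeffding1994} are satisfied. The extra swap $S$ together with the double sum over $j\neq k$ is exactly what upgrades the $\Ord{p^2}$ prefactor of the gradient case to $\Ord{p^4}$, and pinning down these powers of $p$ (rather than over- or under-counting) is the delicate point; everything else is a line-by-line transcription of the argument for Lemma~\ref{lemmaSampleD}.
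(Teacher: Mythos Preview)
Your proposal is correct and follows exactly the route the paper takes: rerun the argument of Lemma~\ref{lemmaSampleD} with $\opD\to\matH_1$, $\opMD\to\opM_{H_1}$, Lemma~\ref{Lm:AandMD}$\to$Lemma~\ref{Lm:AandHA}, and observe that the double sum $j\neq k$ in $\opM_{H_1}$ promotes the $\Ord{p}$ error contributions to $\Ord{p^2}$. The paper's own proof is in fact far terser than yours---it simply asserts the analogy, notes the $\Ord{p}\to\Ord{p^2}$ change, and then \emph{for simplicity squares the $p$-exponents of Lemma~\ref{lemmaSampleD}} ($p^4\to p^8$ for queries, $p^5\to p^{10}$ for gates) while explicitly conceding these may be improvable; your more careful per-step accounting via Lemma~\ref{Lm:AandHA} is consistent with (and would actually slightly sharpen) that crude squaring, so the stated bounds certainly follow.
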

\begin{proof}
The proof is analogous to Lemma \ref{lemmaSampleD} but the error comes in $\Ord{p^2}$ contributions instead of  $\Ord{p}$ contributions.
For simplicity, we take the $p$ dependence to be $p^8$ and $p^{10}$, squaring the $p$ dependence of Lemma \ref{lemmaSampleD}, and note that it may be significantly improved. 
\end{proof}


\begin{thebibliography}{10}

\bibitem{Sra12}
Suvrit Sra, Sebastian Nowozin, and Stephen~J Wright.
\newblock {\em Optimization for machine learning}.
\newblock MIT Press, 2012.

\bibitem{martens2010deep}
James Martens.
\newblock Deep learning via hessian-free optimization.
\newblock In {\em Proceedings of the 27th International Conference on Machine
  Learning (ICML-10)}, pages 735--742, 2010.

\bibitem{boyd04}
Stephen Boyd and Lieven Vandenberghe.
\newblock {\em Convex optimization}.
\newblock Cambridge university press, 2004.

\bibitem{Nocedal2006}
J.~Nocedal and S.~Wright.
\newblock {\em Numerical Optimization}.
\newblock Springer, 2006.

\bibitem{Durr1996}
C.~D\"urr and P.~Hoyer.
\newblock A quantum algorithm for finding the minimum.
\newblock {\em arXiv preprint quant-ph/9607014}, 1996.

\bibitem{Farhi2016}
Edward Farhi and Aram~W Harrow.
\newblock Quantum supremacy through the quantum approximate optimization
  algorithm.
\newblock {\em arXiv:1602.07674}, 2016.

\bibitem{Farhi00}
Edward Farhi, Jeffrey Goldstone, Sam Gutmann, and Michael Sipser.
\newblock Quantum computation by adiabatic evolution.
\newblock {\em arXiv preprint quant-ph/0001106}, 2000.
\newblock MIT-CTP-2936.

\bibitem{Denchev12}
Vasil Denchev, Nan Ding, Hartmut Neven, and Svn Vishwanathan.
\newblock Robust classification with adiabatic quantum optimization.
\newblock In {\em Proceedings of the 29th International Conference on Machine
  Learning (ICML-12)}, pages 863--870, 2012.

\bibitem{Neven09}
Hartmut Neven, Vasil~S Denchev, Geordie Rose, and William~G Macready.
\newblock Training a large scale classifier with the quantum adiabatic
  algorithm.
\newblock {\em arXiv preprint arXiv:0912.0779}, 2009.

\bibitem{Benedetti15}
Marcello Benedetti, John Realpe-G{\'o}mez, Rupak Biswas, and Alejandro
  Perdomo-Ortiz.
\newblock Estimation of effective temperatures in a quantum annealer and its
  impact in sampling applications: A case study towards deep learning
  applications.
\newblock {\em arXiv preprint arXiv:1510.07611}, 2015.

\bibitem{Wiebe12}
Nathan Wiebe, Daniel Braun, and Seth Lloyd.
\newblock Quantum algorithm for data fitting.
\newblock {\em Physical Review Letters}, 109(5):050505, 2012.

\bibitem{Schuld16}
Maria Schuld, Ilya Sinayskiy, and Francesco Petruccione.
\newblock Prediction by linear regression on a quantum computer.
\newblock {\em Physical Review A}, 94(2):022342, 2016.

\bibitem{Rebentrost14}
Patrick Rebentrost, Masoud Mohseni, and Seth Lloyd.
\newblock Quantum support vector machine for big data classification.
\newblock {\em Phys. Rev. Lett.}, 113:130503, Sep 2014.

\bibitem{Harrow09}
Aram~W Harrow, Avinatan Hassidim, and Seth Lloyd.
\newblock Quantum algorithm for linear systems of equations.
\newblock {\em Physical Review Letters}, 103(15):150502, 2009.

\bibitem{peruzzo14}
Alberto Peruzzo, Jarrod McClean, Peter Shadbolt, Man-Hong Yung, Xiao-Qi Zhou,
  Peter~J Love, Al{\'a}n Aspuru-Guzik, and Jeremy~L O’brien.
\newblock A variational eigenvalue solver on a photonic quantum processor.
\newblock {\em Nature communications}, 5, 2014.

\bibitem{farhi14}
Edward Farhi, Jeffrey Goldstone, and Sam Gutmann.
\newblock A quantum approximate optimization algorithm.
\newblock {\em arXiv preprint arXiv:1411.4028}, 2014.

\bibitem{Lloyd14}
Seth Lloyd, Masoud Mohseni, and Patrick Rebentrost.
\newblock Quantum principal component analysis.
\newblock {\em Nature Physics}, 10:631--633, 2014.

\bibitem{Berry2012}
Dominic~W. Berry and Andrew~M. Childs.
\newblock Black-box hamiltonian simulation and unitary implementation.
\newblock {\em Quantum Info. Comput.}, 12(1-2):29--62, January 2012.

\bibitem{jordan05}
Stephen~P Jordan.
\newblock Fast quantum algorithm for numerical gradient estimation.
\newblock {\em Physical Review Letters}, 95(5):050501, 2005.

\bibitem{Kerenidis2017}
I.~Kerenidis and A.~Prakash.
\newblock Quantum gradient descent for linear systems and least squares.
\newblock {\em arXiv:1704.04992}, 2017.

\bibitem{Gilyen2017}
Andr\'as Gily\'en, Srinivasan Arunachalam, and Nathan Wiebe.
\newblock Optimizing quantum optimization algorithms via faster quantum
  gradient computation.
\newblock {\em arXiv:1711.00465}, 2017.

\bibitem{he10}
Simai He, Zhening Li, and Shuzhong Zhang.
\newblock Approximation algorithms for homogeneous polynomial optimization with
  quadratic constraints.
\newblock {\em Mathematical Programming}, 125(2):353--383, 2010.

\bibitem{Goldstein64}
Alan~A Goldstein.
\newblock Convex programming in hilbert space.
\newblock {\em Bulletin of the American Mathematical Society}, 70(5):709--710,
  1964.

\bibitem{Levitin66}
Evgeny~S Levitin and Boris~T Polyak.
\newblock Constrained minimization methods.
\newblock {\em USSR Computational mathematics and mathematical physics},
  6(5):1--50, 1966.

\bibitem{Giovannetti2008}
V.~Giovannetti, S.~Lloyd, and L.~Maccone.
\newblock Quantum random access memory.
\newblock {\em Phys. Rev. Lett.}, 100:160501, 2008.

\bibitem{Giovannetti2008_2}
V.~Giovannetti, S.~Lloyd, and L.~Maccone.
\newblock Architectures for a quantum random access memory.
\newblock {\em Phys. Rev. A}, 78:052310, 2008.

\bibitem{Martini2009}
F.~De Martini, V.~Giovannetti, S.~Lloyd, L.~Maccone, E.~Nagali, L.~Sansoni, and
  F.~Sciarrino.
\newblock Experimental quantum private queries with linear optics.
\newblock {\em Phys. Rev. A}, 80:010302, 2009.

\bibitem{Grover2002}
L.~Grover and T.~Rudolph.
\newblock Creating superpositions that correspond to efficiently integrable
  probability distributions.
\newblock {\em arXiv preprint quant-ph/0208112}, 2002.

\bibitem{Soklakov06}
Andrei~N Soklakov and R{\"u}diger Schack.
\newblock Efficient state preparation for a register of quantum bits.
\newblock {\em Physical Review A}, 73(1):012307, 2006.

\bibitem{Hoeffding1994}
W.~Hoeffding and H.~Robbins.
\newblock The central limit theorem for dependent random variables.
\newblock In {\em Fisher N.I., Sen P.K. (eds) The Collected Works of Wassily
  Hoeffding. Springer Series in Statistics (Perspectives in Statistics)}, New
  York, NY, 1994. Springer.

\bibitem{Hinton06}
Geoffrey~E Hinton, Simon Osindero, and Yee-Whye Teh.
\newblock A fast learning algorithm for deep belief nets.
\newblock {\em Neural computation}, 18(7):1527--1554, 2006.

\bibitem{Bengio09}
Yoshua Bengio.
\newblock Learning deep architectures for ai.
\newblock {\em Foundations and trends{\textregistered} in Machine Learning},
  2(1):1--127, 2009.

\bibitem{Kimmel2017}
Shelby Kimmel, Cedric Yen-Yu Lin, Guang~Hao Low, Maris Ozols, and Theodore~J
  Yoder.
\newblock Hamiltonian simulation with optimal sample complexity.
\newblock {\em npj Quantum Information}, 3(1):13, 2017.

\bibitem{kimmel2017hamiltonian}
Shelby Kimmel, Cedric Yen-Yu Lin, Guang~Hao Low, Maris Ozols, and Theodore~J
  Yoder.
\newblock Hamiltonian simulation with optimal sample complexity.
\newblock {\em npj Quantum Information}, 3(1):13, 2017.

\bibitem{Childs2017}
Andrew~M. Childs, Dmitri Maslov, Yunseong Nam, Neil~J. Ross, and Yuan Su.
\newblock Toward the first quantum simulation with quantum speedup.
\newblock {\em arXiv:1711.10980}, 2017.

\bibitem{Bertsekas1982}
Dimitri~P. Bertsekas.
\newblock Projected newton methods for optimization problems with simple
  constraints.
\newblock {\em SIAM J. Control and Optimization}, 20(2):221, 1982.

\bibitem{Lee2012}
Jason~D. Lee, Yuekai Sun, and Michael~A. Saunders.
\newblock Proximal newton-type methods for convex optimization.
\newblock In {\em Proceedings of the 25th International Conference on Neural
  Information Processing Systems - Volume 1}, NIPS'12, pages 827--835, USA,
  2012. Curran Associates Inc.

\bibitem{Childs2015}
A~M. Childs, R.~Kothari, and R.~D. Somma.
\newblock Quantum linear systems algorithm with exponentially improved
  dependence on precision.
\newblock {\em arXiv:1511.02306}, 2015.

\bibitem{Low2016qubitization}
Guang~Hao Low and Isaac~L. Chuang.
\newblock Hamiltonian simulation by qubitization.
\newblock {\em arXiv:1610.06546}, 2016.

\bibitem{Low2017spectral}
Guang~Hao Low and Isaac~L. Chuang.
\newblock Hamiltonian simulation by uniform spectral amplification.
\newblock {\em arXiv:1707.05391}, 2017.

\bibitem{Brassard2002}
Gilles Brassard, Peter Hoyer, Michele Mosca, and Alain Tapp.
\newblock Quantum amplitude amplification and estimation.
\newblock {\em Contemporary Mathematics}, 305:53--74, 2002.

\bibitem{Ambainis2010}
A.~Ambainis.
\newblock Variable time amplitude amplification and a faster quantum algorithm
  for solving systems of linear equations.
\newblock {\em arXiv:1010.4458}, 2010.

\bibitem{Brandao2017}
Fernando G. S. L.~Brand\ ao, Amir Kalev, Tongyang Li, Cedric Yen-Yu Lin,
  Krysta~M. Svore, and Xiaodi Wu.
\newblock Exponential quantum speed-ups for semidefinite programming with
  applications to quantum learning.
\newblock {\em arXiv:1710.02581}, 2017.

\bibitem{Apeldoorn2017}
Joran van Apeldoorn, Andr\'as Gily\'en, Sander Gribling, and Ronald de~Wolf.
\newblock Quantum sdp-solvers: Better upper and lower bounds.
\newblock {\em arXiv:1705.01843}, 2017.

\bibitem{bookatz2012qma}
Adam~D Bookatz.
\newblock Qma-complete problems.
\newblock {\em arXiv preprint arXiv:1212.6312}, 2012.

\bibitem{dauphin2014identifying}
Yann~N Dauphin, Razvan Pascanu, Caglar Gulcehre, Kyunghyun Cho, Surya Ganguli,
  and Yoshua Bengio.
\newblock Identifying and attacking the saddle point problem in
  high-dimensional non-convex optimization.
\newblock In {\em Advances in neural information processing systems}, pages
  2933--2941, 2014.

\bibitem{cubitt2017universal}
Toby Cubitt, Ashley Montanaro, and Stephen Piddock.
\newblock Universal quantum hamiltonians.
\newblock {\em arXiv preprint arXiv:1701.05182}, 2017.

\end{thebibliography}
\end{document}